\newif\ifconf\conffalse
\newtheorem{definition}{Definition}[section]
\newtheorem{remark}{Remark}[section]
\newtheorem{theorem}{Theorem}[section]
\newtheorem{lemma}[theorem]{Lemma}
\newtheorem{corollary}[theorem]{Corollary}
\newtheorem{definition}[theorem]{Definition}
\newtheorem{remark}[theorem]{Remark}
\newtheorem{fact}[theorem]{Fact}
\newcommand{\abs}[1]{\left|#1\right|}
\newcommand{\norm}[2]{\left \lVert#2\right \rVert_{#1}}
\newcommand{\gaplinf}{\mathsf{Gap}\ell_{\infty}}
\newcommand{\multiLinf}{\mathsf{Multi}\ell_{\infty}}
 \gdef\xxxmark{%
   \expandafter\ifx\csname @mpargs\endcsname\relax 
     \expandafter\ifx\csname @captype\endcsname\relax 
       \marginpar{xxx}
     \else
       xxx 
     \fi
   \else
     xxx 
   \fi}
 \gdef\xxx{\@ifnextchar[\xxx@lab\xxx@nolab}
 \long\gdef\xxx@lab[#1]#2{{\bf [\xxxmark #2 ---{\sc #1}]}}
 \long\gdef\xxx@nolab#1{{\bf [\xxxmark #1]}}
\DeclareMathOperator*{\E}{\mathbb{E}}
\def\R{\mathbb{R}}
\def\eps{\epsilon}
\begin{document}

\title{Lower Bounds for Adaptive Sparse Recovery}

\author{Eric Price\\MIT\\ecprice@mit.edu \and David P. Woodruff\\IBM Almaden\\dpwoodru@us.ibm.com}
\date{}
\maketitle
\thispagestyle{empty}
\setcounter{page}{0}

\begin{abstract}
  We give lower bounds for the problem of stable sparse recovery from
  \emph{adaptive} linear measurements.  In this problem, one would
  like to estimate a vector $x \in \R^n$ from $m$ linear measurements
  $A_1x, \dotsc, A_mx$.  One may choose each vector $A_i$ based on
  $A_1x, \dotsc, A_{i-1}x$, and must output $\hat{x}$ satisfying
  \[
  \norm{p}{\hat{x} - x} \leq (1 + \eps) \min_{k\text{-sparse } x'} \norm{p}{x - x'}
  \]
  with probability at least $1-\delta>2/3$, for some $p \in \{1,2\}$.  For
  $p=2$, it was recently shown that this is possible with $m =
  O(\frac{1}{\eps}k \log \log (n/k))$, while nonadaptively it requires
  $\Theta(\frac{1}{\eps}k \log (n/k))$.  It is also known that even
  adaptively, it takes $m = \Omega(k/\eps)$ for $p = 2$. For $p = 1$,
  there is a non-adaptive upper bound of
  $\widetilde{O}(\frac{1}{\sqrt{\eps}} k\log n)$.  We show:
  \begin{itemize}
  \item For $p=2$, $m = \Omega(\log \log n)$. This is tight for $k =
    O(1)$ and constant $\eps$, and shows that the $\log \log n$
    dependence is correct.
  \item If the measurement vectors are chosen in $R$ ``rounds'', then
    $m = \Omega(R \log^{1/R} n)$.  For constant $\eps$, this matches
    the previously known upper bound up to an $O(1)$ factor in $R$.
  \item For $p=1$, $m = \Omega(k/(\sqrt{\eps} \cdot \log k/\eps))$. This shows that adaptivity
    cannot improve more than logarithmic factors, providing the analogue of the 
    $m = \Omega(k/\eps)$ bound for $p = 2$. 
  \end{itemize}
\end{abstract} 
\newpage

\section{Introduction}

\emph{Compressed sensing} or \emph{sparse recovery} studies the
problem of solving underdetermined linear systems subject to a
sparsity constraint.  It has applications to a wide variety of fields,
including data stream algorithms~\cite{M05}, medical or geological
imaging~\cite{CRT06,D06}, and genetics testing~\cite{SAZ}.  The
approach uses the power of a \emph{sparsity} constraint: a vector $x'$
is \emph{$k$-sparse} if at most $k$ coefficients are non-zero.  A
standard formulation for the problem is that of \emph{stable sparse
  recovery}: we want a distribution $\mathcal{A}$ of matrices $A \in
\R^{m \times n}$ such that, for any $x \in \R^n$ and with probability
$1-\delta>2/3$ over $A \in \mathcal{A}$, there is an algorithm to recover
$\hat{x}$ from $Ax$ with
\begin{align}\label{eq:lplp}
  \norm{p}{\hat{x} - x} \leq (1 + \eps) \min_{k\text{-sparse } x'} \norm{p}{x - x'}
\end{align}
for some parameter $\eps > 0$ and norm $p$.  We refer to the elements
of $Ax$ as \emph{measurements}.  We say Equation~\eqref{eq:lplp}
denotes \emph{$\ell_p/\ell_p$ recovery}.

The goal is to minimize the number of measurements while still
allowing efficient recovery of $x$.  This problem has recently been
largely closed: for $p=2$, it is known that $m = \Theta(\frac{1}{\eps}
k \log (n/k))$ is tight~(upper bounds in \cite{CRT06,GLPS}, lower
bounds in \cite{PW11,CD11}), and for $p=1$ it is known that $m =
\widetilde{O}(\frac{1}{\sqrt{\eps}}k\log n)$ and $m =
\widetilde{\Omega}(\frac{k}{\sqrt{\eps}})$~\cite{PW11} (recall that
$\widetilde{O}(f)$ means $O(f \log^c f)$ for some constant c, and
similarly $\widetilde{\Omega}(f)$ means $\Omega(f / \log^c f)$).

In order to further reduce the number of measurements, a number of recent works
have considered making the measurements
\emph{adaptive}~\cite{JXC,CHNR,HCN,HBCN,MSW,AWZ,IPW11}.
In this setting, one may choose each row of the matrix after seeing
the results of previous measurements.  More generally, one may split
the adaptivity into $R$ ``rounds'', where in each round $r$ one
chooses $A^r \in \R^{m_r \times n}$ based on $A^1x, \dotsc, A^{r-1}x$.
At the end, one must use $A^1x, \dotsc, A^Rx$ to output $\hat{x}$
satisfying Equation~\eqref{eq:lplp}.  We would still like to minimize
the total number of measurements $m = \sum m_i$.  In the $p=2$
setting, it is known that for arbitrarily many rounds
$O(\frac{1}{\eps}k \log \log (n/k))$ measurements suffice, and for
$O(r \log^*k)$ rounds $O(\frac{1}{\eps}k r\log^{1/r} (n/k))$
measurements suffice~\cite{IPW11}.  

Given these upper bounds, two natural questions arise: first, is the
improvement in the dependence on $n$ from $\log (n/k)$ to $\log \log
(n/k)$ tight, or can the improvement be strengthened?  Second, can
adaptivity help by more than a logarithmic factor, by improving the
dependence on $k$ or $\eps$?

A recent lower bound showed that $\Omega(k/\eps)$ measurements are
necessary in a setting essentially equivalent to the $p = 2$
case~\cite{ACD11}\footnote{Both our result and their result apply in
  both settings.  See Appendix~\ref{app:noisekind} for a more detailed
  discussion of the relationship between the two settings.}.  Thus, they
answer the second question in the negative for $p=2$.
Their techniques rely on special properties of the
$2$-norm; namely, that it is a rotationally invariant inner product
space and that the Gaussian is both $2$-stable and a maximum entropy
distribution.  Such techniques do not seem useful for proving lower
bounds for $p = 1$.

\paragraph{Our results.} For $p=2$, we show that any adaptive sparse
recovery scheme requires $\Omega(\log \log n)$ measurements, or
$\Omega(R \log^{1/R} n)$ measurements given only $R$ rounds.  For $k =
O(1)$, this matches the upper bound of~\cite{IPW11} up to an
$O(1)$ factor in $R$.  It thus shows that the $\log \log n$
term in the adaptive bound is necessary.

For $p=1$, we show that any adaptive sparse recovery scheme requires
$\widetilde{\Omega}(k/\sqrt{\eps})$ measurements. This shows that
adaptivity can only give $\text{polylog}(n)$ improvements, even for $p
= 1$.  Additionally, our bound of $\Omega(k / (\sqrt{\eps} \cdot \log
(k / \sqrt{\eps})))$ improves the previous \emph{non-adaptive} lower
bound for $p=1$ and small $\eps$, which lost an additional $\log k$
factor~\cite{PW11}.

\paragraph{Related work.} Our work draws on the lower bounds for
non-adaptive sparse recovery, most directly~\cite{PW11}.

The main previous lower bound for adaptive sparse recovery gets $m =
\Omega(k/\eps)$ for $p=2$~\cite{ACD11}.  They consider going down a
similar path to our $\Omega(\log \log n)$ lower bound, but ultimately
reject it as difficult to bound in the adaptive setting.  Combining
their result with ours gives a $\Omega(\frac{1}{\eps}k + \log \log n)$
lower bound, compared with the $O(\frac{1}{\eps}k \cdot \log \log n)$
upper bound. The techniques in their paper do not imply any bounds for
the $p = 1$ setting.

For $p=2$ in the special case of adaptive Fourier measurements (where
measurement vectors are adaptively chosen from among $n$ rows of the
Fourier matrix),~\cite{HIKP12} shows $\Omega(k \log (n/k) / \log \log
n)$ measurements are necessary.  In this case the main difficulty with
lower bounding adaptivity is avoided, because all measurement rows are
chosen from a small set of vectors with bounded $\ell_\infty$ norm;
however, some of the minor issues in using~\cite{PW11} for an adaptive
bound were dealt with there.

\paragraph{Our techniques.}  We use very different techniques for our
two bounds.

To show $\Omega(\log \log n)$ for $p=2$, we reduce to the information
capacity of a Gaussian channel.  We consider recovery of the vector $x
= e_{i^*} + w$, for $i^* \in [n]$ uniformly and $w \sim N(0,
I_n/\Theta(n))$.  Correct recovery must find $i^*$, so the mutual
information $I(i^*; Ax)$ is $\Omega(\log n)$.  On the other hand, in
the nonadaptive case~\cite{PW11} showed that each measurement $A_jx$
is a power-limited Gaussian channel with constant signal-to-noise
ratio, and therefore has $I(i^*; A_jx) = O(1)$.  Linearity gives that
$I(i^*; Ax) = O(m)$, so $m = \Omega(\log n)$ in the nonadaptive case.
In the adaptive case, later measurements may ``align'' the row $A_j$
with $i^*$, to increase the signal-to-noise ratio and extract more
information---this is exactly how the upper bounds work.  To deal with
this, we bound how much information we can extract as a function of
how much we know about $i^*$.  In particular, we show that given a
small number $b$ bits of information about $i^*$, the posterior
distribution of $i^*$ remains fairly well ``spread out''.  We then
show that any measurement row $A_j$ can only extract $O(b+1)$ bits
from such a spread out distribution on $i^*$.  This shows that the
information about $i^*$ increases at most exponentially, so
$\Omega(\log \log n)$ measurements are necessary.

To show an $\widetilde{\Omega}(k/\sqrt{\eps})$ bound for $p=1$, 
we first establish a lower bound on the multiround distributional 
communication complexity
of a two-party communication problem that we call $\multiLinf$, for a 
distribution tailored to our application. 
We then show how to use an adaptive $(1+\eps)$-approximate 
$\ell_1/\ell_1$ sparse
recovery scheme $\mathcal{A}$ 
to solve the communication problem $\multiLinf$, modifying the 
general framework
of \cite{PW11} for connecting non-adaptive schemes to communication 
complexity in
order to now support adaptive schemes. 
By the communication lower bound for $\multiLinf$, 
we obtain a lower bound on the number of measurements required of $\mathcal{A}$. 

In the $\gaplinf$ problem, the two players are given $x$ and $y$ respectively,
and they want to approximate $\|x-y\|_{\infty}$ given the promise that all entries
of $x-y$ are small in magnitude or there is a single large entry. The $\multiLinf$
problem consists of solving multiple independent instances of $\gaplinf$ in parallel. Intuitively,
the sparse recovery algorithm needs to determine if there are entries of $x-y$ that are large, which
corresponds to solving multiple instances of $\gaplinf$. 
We prove a multiround direct sum theorem for a distributional version of $\gaplinf$, 
thereby giving a distributional lower bound for $\multiLinf$. A direct sum theorem for $\gaplinf$
has been used before for proving lower bounds
for non-adaptive schemes \cite{PW11}, but was limited to a bounded
number of rounds due to the use of a bounded round theorem in
communication complexity \cite{br11}.
We instead use the information complexity framework \cite{BJKS04} to 
lower bound the conditional mutual information between the inputs to $\gaplinf$
and the transcript of any correct protocol for $\gaplinf$ under a certain input distribution, and prove a 
direct sum theorem for solving $k$ instances of this problem. 
We need to condition on ``help variables'' in the mutual information which enable the players
to embed single instances of $\gaplinf$ into $\multiLinf$ in a way in which the players can use
a correct protocol on our input distribution for $\multiLinf$ 
as a correct protocol on
our input distribution for $\gaplinf$; these help variables are 
in addition to help variables used for proving
lower bounds for $\gaplinf$, which
is itself proved using information complexity. We also look 
at the conditional mutual information with respect to an input distribution 
which doesn't immediately fit into the information complexity framework. 
We relate the conditional information of the transcript with respect to this
distribution to that with respect to a more standard distribution. 
%
%
\section{Notation}

We use lower-case letters for fixed values and upper-case letters for
random variables.  We use $\log x$ to denote $\log_2 x$, and $\ln x$
to denote $\log_e x$.  For a discrete random variable $X$ with
probability $p$, we use $H(X)$ or $H(p)$ to denote its entropy
\[
H(X) = H(p) = \sum - p(x) \log p(x).
\]
For a continuous random variable $X$ with pdf $p$, we use $h(X)$ to
denote its differential entropy
\[
h(X) = \int_{x \in X} -p(x) \log p(x) dx.
\]
Let $y$ be drawn from a random variable $Y$.  Then $(X \mid y) = (X
\mid Y = y)$ denotes the random variable $X$ conditioned on $Y = y$.
We define $h(X \mid Y) = \E_{y\sim Y} h(X \mid y)$.  The mutual
information between $X$ and $Y$ is denoted $I(X; Y) = h(X) - h(X \mid
Y)$.

For $p \in \R^n$ and $S \subseteq [n]$, we define $p_S \in \R^n$ to
equal $p$ over indices in $S$ and zero elsewhere.

We use $f \lesssim g$ to denote $f = O(g)$.

\section{Tight lower bound for $p=2, k=1$ }

We may assume that the measurements are orthonormal, since this can be
performed in post-processing of the output, by multiplying $Ax$ on the
left to orthogonalize $A$.  We will give a lower bound for the
following instance:

Alice chooses random $i^* \in [n]$ and i.i.d.\ Gaussian noise $w \in \R^n$
with $\E[\norm{2}{w}^2] = \sigma^2 = \Theta(1)$, then sets $x =
e_{i^*} + w$.  Bob performs $R$ rounds of adaptive measurements on
$x$, getting $y^r = A^rx = (y^r_1, \dotsc, y^r_{m_r})$ in each round
$r$.  Let $I^*$ and $Y^r$ denote the random variables from which $i^*$
and $y^r$ are drawn, respectively.  We will bound $I(I^*; Y^1, Y^2,
\dotsc, Y^r)$.

We may assume Bob is deterministic, since we are giving a lower bound
for a distribution over inputs -- for any randomized Bob that succeeds
with probability $1-\delta$, there exists a choice of random seed such
that the corresponding deterministic Bob also succeeds with
probability $1-\delta$.

First, we give a bound on the information received from any single
measurement, depending on Bob's posterior distribution on $I^*$ at
that point:

\begin{lemma}\label{lemma:oneround}
  Let $I^*$ be a random variable over $[n]$ with probability
  distribution $p_i = \Pr[I^*=i]$, and define
  \[
  b = \sum_{i=1}^n p_i \log (n p_i).
  \]
  Define $X = e_{I^*} + N(0, I_n\sigma^2/n)$.  Consider any fixed
  vector $v \in \R^n$ independent of $X$ with $\norm{2}{v} = 1$, and
  define $Y = v \cdot X$.  Then
  \[
  I(v_{I^*}; Y) \leq C(b + 1)
  \]
  for some constant $C$.
\end{lemma}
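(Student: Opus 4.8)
The plan is to decompose $Y = v \cdot X$ into a "signal" part and an independent Gaussian "noise" part, and then bound the mutual information channel by channel. Write $Y = v_{I^*} + v \cdot w$, where $w \sim N(0, I_n \sigma^2/n)$. Since $v$ is fixed and $\norm{2}{v} = 1$, the term $v \cdot w$ is distributed as $N(0, \sigma^2/n)$ and is independent of $I^*$ (conditioned on the value of $v_{I^*}$, it is still $N(0,\sigma^2/n)$ because the noise in the coordinate $i^*$ contributes $v_{i^*}^2 \sigma^2/n$ and the rest contributes the complement — one has to be slightly careful here: $v \cdot w$ is $N(0,\sigma^2/n)$ unconditionally, but conditioned on $I^* = i$ it is still $N(0,\sigma^2/n)$ since it is a fixed linear functional of $w$ not involving $I^*$ at all). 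So $Y \mid (v_{I^*} = t)$ is exactly $N(t, \sigma^2/n)$. Thus $Y$ is the output of an additive Gaussian noise channel with input $v_{I^*}$, and $I(v_{I^*}; Y) \le \frac{1}{2}\log\left(1 + \frac{\E[v_{I^*}^2] - (\E v_{I^*})^2}{\sigma^2/n}\right) \le \frac{1}{2}\log\left(1 + \frac{n \E[v_{I^*}^2]}{\sigma^2}\right)$, using that the capacity of a power-constrained Gaussian channel is maximized (among all input distributions of a given variance) by the Gaussian, and then bounding the variance by the second moment.

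The heart of the argument is therefore to show $\E[v_{I^*}^2] = \sum_i p_i v_i^2 \lesssim (b+1)/n$, since then the logarithm above is $O(\log(b+2)) = O(b+1)$. This is where the constraint $\norm{2}{v}=1$ and the definition of $b$ come in, via a convex-duality / entropy argument. We have $\sum_i v_i^2 = 1$, and we want to bound $\sum_i p_i v_i^2$. Think of $(v_i^2)_i$ as a probability distribution $q_i = v_i^2$. Then $\sum_i p_i v_i^2 = \sum_i p_i q_i$, and we want to bound $\sum_i p_i q_i$ in terms of $b = \sum_i p_i \log(n p_i) = \log n - H(p) \ge 0$ — roughly, $b$ measures how far $p$ is from uniform, i.e.\ how "concentrated" it is. The worst case for $\sum p_i q_i$ puts $q$'s mass on the atoms where $p_i$ is largest; a Lagrangian/Gibbs-variational argument shows that for any distribution $q$, $\sum_i p_i q_i \le \frac{1}{n}\left(e^{\lambda} \cdot (\text{something}) \right)$ — more precisely, one uses the inequality $\sum_i p_i q_i \le \frac{1}{\lambda}\left(\sum_i q_i \ln(q_i / (q_i)) ...\right)$; cleaner is: for any $\lambda > 0$,
\[
\lambda \sum_i p_i q_i \le \sum_i q_i \ln \frac{q_i}{\frac1n} + \ln \sum_i \tfrac1n e^{\lambda p_i \cdot n} \cdot(\dots),
\]
so the right normalization of this "Donsker–Varadhan" bound, with the reference measure uniform on $[n]$, gives $\sum_i p_i q_i \lesssim \frac{1}{n}(D(q \| U) + \log(\text{const}))$ once one checks $\sum_i \frac1n e^{n p_i} $ is controlled by $b$. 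The clean statement to aim for is $\sum_i p_i v_i^2 \le \frac{C'(b+1)}{n}$, proved by choosing $\lambda = \Theta(n)$ in a Donsker–Varadhan variational inequality and bounding the log-partition function using $\sum_i p_i \log(np_i) = b$.

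I expect the main obstacle to be exactly this last estimate: translating the "concentration budget" $b$ into a bound on $\max$-type weighting $\sum p_i v_i^2$ with the correct $1/n$ scaling and only an additive (not multiplicative) loss. A naive bound like $\sum_i p_i v_i^2 \le \max_i p_i$ is far too weak (it gives no $n$-dependence when $p$ is, say, uniform on $2$ atoms). The right approach is the variational formula for relative entropy: for the uniform reference $U$ on $[n]$, $\sum_i p_i f_i \le D(p\|U) + \log \E_{i\sim U}[e^{f_i}]$, applied with $f_i = \lambda n v_i^2$; then $D(p\|U) = b$, and $\E_{i \sim U}[e^{\lambda n v_i^2}] = \frac1n \sum_i e^{\lambda n v_i^2}$, which is at most $\frac1n \sum_i (1 + (\text{stuff}) n v_i^2) \le O(1)$ for $\lambda = \Theta(1/\max_i(n v_i^2))$ — but $v$ could have one coordinate with $v_i^2$ close to $1$, making $n v_i^2$ as large as $n$, so one cannot take $\lambda$ a constant. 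Instead one splits $v$ into its large coordinate(s) and the rest: a single large coordinate $v_{i_0}^2 \le 1$ contributes $p_{i_0} v_{i_0}^2 \le p_{i_0}$, and $p_{i_0} \le 2^{-\Omega(?)}$... actually $p_{i_0}\log(n p_{i_0}) \le b$ forces $p_{i_0} \lesssim \max(1/n, b/\log n) \cdot(\dots)$ — more carefully, $p_{i_0} \log(n p_{i_0})$ can be as small as $0$ when $p_{i_0} = 1/n$, so a single large $v$-coordinate on a single-atom–support $p$ does give $\sum p_i v_i^2 = p_{i_0} v_{i_0}^2$ which can be $\Theta(1)$, but then $p$ is a point mass, $b = \log n$, and indeed $(b+1)/n \cdot n = b+1 = \Theta(\log n) \gg 1$, consistent. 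So the estimate is true but needs a careful case split between the contribution of coordinates $i$ with $n v_i^2 \le $ threshold (handled by Donsker–Varadhan with $\lambda \asymp n/\text{threshold}$) and the few coordinates above threshold (handled by $p_i v_i^2 \le p_i \le (b + O(1))/\log(\text{threshold})$ from the definition of $b$, summing over the at most threshold$^{-1}$ many such coordinates). Optimizing the threshold (around a constant, or $\sqrt{\,}$-type) yields the claimed $O(b+1)$ bound.
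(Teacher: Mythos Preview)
Your approach has a genuine gap: the key intermediate estimate
\[
\sum_i p_i v_i^2 \;\lesssim\; \frac{b+1}{n}
\]
is simply false, and more importantly, the single Gaussian channel-capacity bound
\[
I(v_{I^*};Y)\;\le\;\tfrac12\log\!\Bigl(1+\tfrac{n\,\E[v_{I^*}^2]}{\sigma^2}\Bigr)
\]
is too weak to give $O(b+1)$. A clean counterexample: take $p_1=1/\log n$ and $p_i=(1-1/\log n)/(n-1)$ for $i>1$, and $v=e_1$. Then $b=\sum_i p_i\log(np_i)\approx 1$, while $\E[v_{I^*}^2]=p_1=1/\log n$, so $n\,\E[v_{I^*}^2]=n/\log n$ and the capacity bound is $\tfrac12\log(1+n/(\sigma^2\log n))=\Theta(\log n)$, not $O(b+1)=O(1)$. (The actual mutual information here is at most $H(v_{I^*})=H(\mathrm{Bernoulli}(1/\log n))=o(1)$, so the lemma is fine; it is the capacity bound that is loose.) Your ``consistency check'' for the point-mass case only verified the final inequality, not the intermediate one; the point-mass case happens to be harmless because then $b=\log n$, but intermediate concentrations break the argument. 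The Donsker--Varadhan splitting you sketch cannot rescue this: the contribution of the ``large-$v$'' coordinates can be bounded by at best $(b+1)/\log T$ for threshold $T$, which is $\Theta(b+1)$, not $\Theta((b+1)/n)$.

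The underlying reason the one-shot capacity bound fails is that it is only tight for Gaussian inputs; when $v_{I^*}$ is highly non-Gaussian (e.g., Bernoulli with tiny parameter), its variance can be large compared to the noise while its entropy is tiny. The paper's proof gets around this by partitioning $[n]$ into dyadic level sets $S_i=\{j: 2^i\le np_j<2^{i+1}\}$ of the \emph{distribution $p$} (not of $v$), introducing the auxiliary discrete variable $T$ indicating which level set $I^*$ lies in, and bounding $h(Y)\le H(T)+\sum_i t_i\,h(Y\mid I^*\in S_i)$. Within each level set the $p_j$ are essentially constant, so the conditional second moment is controlled by $\|p_{S_i}\|_\infty/t_i\le 2^{i+1}/(nt_i)$, and a capacity bound per level set, weighted by $t_i$, sums to $O(b+1)$ via $\sum_i i\,t_i\le b+1$ and $H(t)\le b+O(1)$. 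The level-set decomposition is the essential idea your proposal is missing.
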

\begin{proof}
  Let $S_i = \{j \mid 2^{i} \leq np_j < 2^{i+1}\}$ for $i > 0$ and
  $S_{0} = \{i \mid np_i < 2\}$.  Define $t_i = \sum_{j \in S_i} p_j =
  \Pr[I^* \in S_i]$.  Then
  \begin{align*}
    \sum_{i=0}^{\infty} i t_i &= \sum_{i > 0} \sum_{j \in S_i} p_j \cdot i\\
    &\leq \sum_{i > 0} \sum_{j \in S_i} p_j\log (np_j)\\
    &= b - \sum_{j \in S_{0}} p_j \log (np_j)\\
    &\leq b - t_{0} \log (n t_{0} / \abs{S_{0}})\\
    &\leq b + \abs{S_{0}} / (ne)
  \end{align*}
  using convexity and minimizing $x \log ax$ at $x = 1/(ae)$.  Hence
  \begin{align}\label{eq:it_i}
    \sum_{i=0}^{\infty} i t_i < b + 1
  \end{align}

  Let $W = N(0, \sigma^2/n)$.  For any measurement vector $v$, let $Y
  = v \cdot X \sim v_{I^*} + W$.  Let $Y_i = (Y \mid I^* \in S_i)$.
  Because $\sum v_j^2 = 1$,
  \begin{align}\label{eq:Yibound}
    \E[Y_i^2] &= \sigma^2/n + \sum_{j \in S_i} v_j^2 p_j/t_i 
\leq
    \sigma^2/n + \norm{\infty}{p_{S_i}}/t_i \leq \sigma^2/n +
    2^{i+1}/(nt_i).
  \end{align}
  Let $T$ be the (discrete) random variable denoting the $i$ such that
  $I^* \in S_i$.  Then $Y$ is drawn from $Y_T$, and $T$ has
  probability distribution $t$.  Hence
  \begin{align*}
    h(Y) &\leq h((Y, T))\\
    &= H(T) + h(Y_T \mid T)\\
    &= H(t) + \sum_{i \geq 0} t_i h(Y_i)\\
    &\leq H(t) + \sum_{i \geq 0} t_i h(N(0, \E[Y_i^2]))
  \end{align*}
  because the Gaussian distribution maximizes entropy subject to a
  power constraint.  Using the same technique as the Shannon-Hartley
  theorem,
  \begin{align*}
    I(v_{I^*}, Y)
= I(v_{I^*}; v_{I^*} + W) 
    &= h(v_{I^*} + W) - h(v_{I^*}+W \mid v_{I^*})\\
    &= h(Y) - h(W)\\
    &\leq H(t) + \sum_{i \geq 0} t_i (h(N(0, \E[Y_i^2])) - h(W))\\
    &= H(t) + \frac{1}{2}\sum_{i \geq 0} t_i \ln(\frac{\E[Y_i^2]}{\E[W^2]})
  \end{align*}
  and hence by Equation~\eqref{eq:Yibound},
  \begin{align}\label{eq:Iint_i}
    I(v_{I^*}; Y) \leq H(t) + \frac{\ln 2}{2}\sum_{i \geq 0} t_i \log(1 + \frac{2^{i+1}}{t_i\sigma^2}).
  \end{align}

  All that requires is to show that this is $O(1 + b)$.  Since $\sigma
  = \Theta(1)$, we have
  \begin{align}
    \sum_i t_i \log(1 + \frac{2^i}{\sigma^2t_i})
    &\leq \log(1 + 1/\sigma^2) + \sum_i t_i \log(1 + \frac{2^i}{t_i})\notag\\
    &\leq O(1) + \sum_{i} t_i \log(1 + 2^i) +\phantom{}
\sum_{i} t_i \log(1 + 1/t_i).\label{eq:splitting}
  \end{align}
  Now, $\log(1 + 2^i) \lesssim i$ for $i > 0$ and is $O(1)$ for $i =
  0$, so by Equation~\eqref{eq:it_i},
  \[
  \sum_{i} t_i \log(1 + 2^i) \lesssim 1 + \sum_{i > 0} i t_i < 2 + b.
  \]
  Next, $\log(1 + 1/t_i) \lesssim \log(1/t_i)$ for $t_i \leq 1/2$, so
  \begin{align*}
    \sum_{i} t_i \log(1 + 1/t_i) &\lesssim \sum_{i \mid t_i \leq 1/2} t_i
    \log (1/t_i) + \sum_{i \mid t_i > 1/2} 1
    \leq H(t) + 1.
  \end{align*}
  Plugging into Equations~\eqref{eq:splitting} and~\eqref{eq:Iint_i},
  \begin{align}\label{eq:nasty_intermediate}
    I(v_{I^*}, Y) \lesssim 1 + b +
    H(t).
  \end{align}
  To bound $H(t)$, we consider the partition $T_+ = \{i \mid t_i >
  1/2^i\}$ and $T_- = \{i \mid t_i \leq 1/2^i\}$.  Then
  \begin{align*}
    H(t) &= \sum_i t_i \log (1/t_i)\\
    &\leq \sum_{i \in T_+} i t_i + \sum_{t \in T_-} t_i \log (1/t_i)\\
    &\leq 1 + b + \sum_{t \in T_-} t_i \log (1/t_i)
  \end{align*}
  But $x \log (1/x)$ is increasing on $[0, 1/e]$, so
  \begin{align*}
    \sum_{t \in T_-} t_i \log (1/t_i) &\leq t_0 \log (1/t_0) + t_1 \log
    (1/t_1) +
\sum_{i \geq 2} \frac{1}{2^i} \log (1/2^i) 
\leq 2/e + 3/2 = O(1)
  \end{align*}
  and hence $H(t) \leq b + O(1)$.  Combining with
  Equation~\eqref{eq:nasty_intermediate} gives that
  \[
  I(v_{I^*}; Y) \lesssim b + 1
  \]
  as desired.
\end{proof}

\begin{theorem}
  Any scheme using $R$ rounds with number of measurements $m_1, m_2,
  \dotsc, m_R > 0$ in each round has
  \[
  I(I^*; Y^1, \dotsc, Y^R) \leq C^R\prod_i m_i
  \]
  for some constant $C > 1$.
\end{theorem}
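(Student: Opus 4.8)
The plan is to prove this by induction on the number of rounds $R$, using Lemma~\ref{lemma:oneround} to control the information gain within each round. The key quantity is Bob's posterior distribution on $I^*$ after each round: if after rounds $1, \dots, r$ Bob has learned $b$ bits of information about $I^*$ (in the sense that $I(I^*; Y^1, \dots, Y^r) \le b$), then the posterior distribution on $I^*$, averaged over the measurement outcomes, has the property that $\E\left[\sum_i P_i \log(n P_i)\right] \le b$ where $P_i$ is the (random) posterior probability — this is exactly the quantity ``$b$'' appearing in Lemma~\ref{lemma:oneround}, since $\sum_i P_i \log(n P_i) = \log n - H(I^* \mid Y^1 = y^1, \dots)$ and $\E[H(I^* \mid Y^{\le r})] = H(I^*) - I(I^*; Y^{\le r}) \ge \log n - b$.

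First I would handle a single round of $m_r$ measurements. Fix the outcomes $y^1, \dots, y^{r-1}$ of previous rounds; then $A^r$ is a fixed matrix (Bob is deterministic) which we may assume has orthonormal rows. Writing $v^1, \dots, v^{m_r}$ for these rows, I want to bound $I(I^*; Y^r_1, \dots, Y^r_{m_r} \mid Y^{<r})$. By the chain rule this is $\sum_{j=1}^{m_r} I(I^*; Y^r_j \mid Y^r_1, \dots, Y^r_{j-1}, Y^{<r})$. For each term, conditioning on the previous measurements and $Y^{<r}$ fixes a posterior distribution $p$ on $I^*$ with associated parameter $b_j = \sum_i p_i \log(n p_i)$; since the measurement $Y^r_j = v^j \cdot X$ only depends on $I^*$ through $v^j_{I^*}$ (the rest of $X$ is independent Gaussian noise independent of everything), the data processing inequality gives $I(I^*; Y^r_j \mid \cdots) \le I(v^j_{I^*}; Y^r_j \mid \cdots) \le C(\E[b_j] + 1)$ by Lemma~\ref{lemma:oneround}, where the expectation is over the conditioning. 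Summing over $j$ and using that $\E[b_j] \le \E[b_{m_r}]$ would be too lossy; instead I would telescope: the $b_j$'s are increasing in expectation, and $\sum_j \E[b_j] \le m_r \cdot \E[b_{\text{final in round }r}]$, so the round contributes at most $C m_r (\E[b^{(r)}] + 1)$ bits, where $b^{(r)}$ is the posterior parameter at the end of round $r$ and $\E[b^{(r)}] = I(I^*; Y^{\le r})$.

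This yields the recursion: if $I_r := I(I^*; Y^1, \dots, Y^r)$, then $I_r - I_{r-1} \le C m_r (I_{r-1} + 1)$, i.e. $I_r + 1 \le (1 + C m_r)(I_{r-1} + 1) \le (C+1) m_r (I_{r-1}+1)$ using $m_r \ge 1$. Unrolling from $I_0 = 0$ gives $I_R + 1 \le (C+1)^R \prod_i m_i$, hence $I(I^*; Y^1, \dots, Y^R) \le (C+1)^R \prod_i m_i$, which is the claim with constant $C+1$ in place of $C$ (renaming the constant).

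The main obstacle I anticipate is making the within-round telescoping argument fully rigorous: Lemma~\ref{lemma:oneround} is stated for a \emph{fixed} vector $v$ independent of $X$, but inside round $r$ the $j$-th measurement vector $v^j$ is deterministic only after conditioning on $y^{<r}$ — it is the \emph{same} for all outcomes of $Y^r_1, \dots, Y^r_{j-1}$ within a fixed previous transcript, which is fine, but one must carefully set up the conditional mutual information so that for each fixed value of $(Y^{<r}, Y^r_1, \dots, Y^r_{j-1})$ the vector $v^j$ is genuinely fixed and the residual noise in $v^j \cdot X$ is still $N(0,\sigma^2/n)$ independent of $I^*$ and of the conditioning. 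The orthonormality assumption on the rows is what guarantees the noise components seen across a round behave correctly; I would note that conditioning on $Y^r_1, \dots, Y^r_{j-1} = $ fixed values shifts the effective distribution of $v^j_{I^*} + W$ but, crucially, because the rows are orthonormal the Gaussian noise $v^j \cdot w$ is independent of $v^1 \cdot w, \dots, v^{j-1}\cdot w$, so the relevant posterior on $I^*$ combined with fresh independent noise $W$ is exactly the setting of Lemma~\ref{lemma:oneround}. Verifying this independence and that the averaged posterior parameter telescopes correctly is the delicate part; the rest is bookkeeping with the chain rule and Jensen's inequality.
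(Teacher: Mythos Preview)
Your overall plan and the target recursion $I_r - I_{r-1} \le C m_r (I_{r-1}+1)$ are exactly right, but the within-round argument has a real gap. Via the chain rule you bound the $j$-th summand $I(I^*; Y^r_j \mid Y^r_{<j}, Y^{<r})$ by $C(\E[b_j]+1)$, where $b_j$ is the posterior parameter \emph{after} the first $j-1$ measurements of round $r$. These $\E[b_j]$ increase with $j$: $\E[b_1] = I_{r-1}$ but $\E[b_{m_r}]$ is essentially $I_r$. When you bound $\sum_j \E[b_j] \le m_r\,\E[b^{(r)}]$ with $b^{(r)}$ the end-of-round parameter (so $\E[b^{(r)}]=I_r$, as you yourself write), what actually follows is $I_r - I_{r-1} \le C m_r(I_r+1)$, with $I_r$ on the right --- not $I_{r-1}$. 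That inequality is vacuous once $C m_r \ge 1$, so your stated recursion does not follow. (If instead you iterate the per-measurement inequality $a_j+1 \le (C+1)(a_{j-1}+1)$, you obtain only $I_R + 1 \le (C+1)^{m}$, which is strictly weaker than the theorem and in particular would not give the $\Omega(R\log^{1/R} n)$ corollary.)

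The paper sidesteps this by \emph{not} chaining inside a round. Since all rows of $A^r$ are fixed once $y^{<r}$ is fixed and are orthonormal, one has $Y^r_j = Z^r_j + W^r_j$ with $Z^r_j = (A^r e_{I^*})_j$ and the noises $W^r_j$ mutually independent and independent of $Z^r$. Subadditivity of entropy (Lemma~\ref{lemma:splitentropy}) then gives
\[
I(Z^r; Y^r \mid y^{<r}) \;\le\; \sum_{j=1}^{m_r} I(Z^r_j; Y^r_j \mid y^{<r}),
\]
with no conditioning on the other within-round observations. Every summand now uses the \emph{same} start-of-round posterior, so Lemma~\ref{lemma:oneround} bounds each by $C(b_r+1)$ where $\E[b_r]=I_{r-1}$; combined with the conditional Markov chain $I^* \to Z^r \to Y^r$ this yields $I(I^*; Y^r \mid y^{<r}) \le C m_r(b_r+1)$, and averaging over $y^{<r}$ produces precisely $I_r - I_{r-1} \le C m_r(I_{r-1}+1)$. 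The orthonormality you already invoked is exactly what makes the $W^r_j$ independent and this subadditivity step legitimate; the missing ingredient in your argument is to replace the chain rule within a round by that subadditivity.
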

\begin{proof}
  Let the signal in the absence of noise be $Z^r = A^re_{I^*} \in
  \R^{m_r}$, and the signal in the presence of noise be $Y^r =
  A^r(e_{I^*} + N(0, \sigma^2 I_n / n)) = Z^r + W^r$ where $W^r = N(0,
  \sigma^2I_{m_r}/n)$ independently.  In round $r$, after observations
  $y^1, \dotsc, y^{r-1}$ of $Y^1, \dotsc, Y^{r-1}$, let $p^r$ be the
  distribution on $(I^* \mid y^1, \dotsc, y^{r-1})$.  That is, $p^r$
  is Bob's posterior distribution on $I^*$ at the beginning of round
  $r$.

  We define
  \begin{align*}
  b_r &= H(I^*) - H(I^* \mid y^1, \dotsc, y^{r-1})\\
  &= \log n - H(p^r)\\
  &= \sum p^r_i \log (n p^r_i).
  \end{align*}
  Because the rows of $A^r$ are deterministic given $y^1, \dotsc,
  y^{r-1}$, Lemma~\ref{lemma:oneround} shows that any single
  measurement $j \in [m_r]$ satisfies
  \[
  I(Z^r_j; Y^r_j  \mid y^1, \dotsc, y^{r-1}) \leq C(b_r+1).
  \]
  for some constant $C$.  Thus by Lemma~\ref{lemma:splitentropy}
  \[
  I(Z^r; Y^r  \mid y^1, \dotsc, y^{r-1}) \leq Cm_r(b_r+1).
  \]
  There is a Markov chain $(I^* \mid y^1, \dotsc, y^{r-1}) \to (Z^r
  \mid y^1, \dotsc, y^{r-1}) \to (Y^r \mid y^1, \dotsc, y^{r-1})$, so
  \begin{align*}
    I(I^*; Y^r \mid y^1, \dotsc, y^{r-1})
\leq I(Z^r; Y^r \mid y^1, \dotsc, y^{r-1}) 
\leq Cm_r(b_r + 1).
  \end{align*}
  We define $B_r = I(I^*; Y^1, \dotsc, Y^{r-1}) = \E_{y} b_r$.  Therefore
  \begin{align*}
    B_{r+1}
    &= I(I^*; Y^1, \dotsc, Y^r) \\
    &= I(I^*; Y^1, \dotsc, Y^{r-1}) + I(I^*; Y^r \mid Y^1, \dotsc, Y^{r-1})\\
    &= B_r + \E_{y^1, \dotsc, y^{r-1}} I(I^*; Y^r \mid y^1, \dotsc, y^{r-1})\\
    &\leq B_r + Cm_r \E_{y^1, \dotsc, y^{r-1}} (b_r+1) \\
    &= (B_r+1)(Cm_r + 1) - 1\\
    &\leq C'm_r (B_r + 1)
  \end{align*}
  for some constant $C'$.  Then for some constant $D \geq C'$,
  \[
  I(I^*; Y^1, \dotsc, Y^R) = B_{R+1} \leq D^R\prod_i m_i
  \]
  as desired.
\end{proof}

\begin{corollary}
  Any scheme using $R$ rounds with $m$ measurements has
  \[
  I(I^*; Y^1, \dotsc, Y^R) \leq (Cm/R)^R
  \]
  for some constant $C$.  Thus for sparse recovery, $m =
  \Omega(R\log^{1/R} n)$.  Minimizing over $R$, we find that $m =
  \Omega(\log \log n)$ independent of $R$.
\end{corollary}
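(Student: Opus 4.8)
The plan is to feed the Theorem into Fano's inequality and then optimize over the number of rounds $R$. The Theorem gives $I(I^*; Y^1,\dots,Y^R) \le C^R\prod_i m_i$ subject to $\sum_i m_i = m$ with each $m_i>0$. By the AM--GM inequality (equivalently, concavity of $\log$), $\prod_i m_i \le (m/R)^R$, so $I(I^*; Y^1,\dots,Y^R) \le (Cm/R)^R$, which is the first claim of the corollary.

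Next I would check that a correct sparse recovery scheme, run on the input distribution of this section, actually recovers $I^*$. Recall $X = e_{I^*} + w$ with $w \sim N(0, I_n\sigma^2/n)$ and $\sigma = \Theta(1)$; I would additionally take $\sigma$ small enough in terms of $\eps$. Condition on the event that $\norm{\infty}{w} \le 1/4$ and $\norm{2}{w} \le 2\sigma$, which by Gaussian concentration has probability $1-o(1)$. On this event the largest-magnitude coordinate of $X$ is $I^*$, since $\abs{X_{I^*}} \ge 3/4 > 1/4 \ge \abs{X_j}$ for $j \ne I^*$; hence $\min_{1\text{-sparse }x'}\norm{2}{X - x'} = \norm{2}{w_{[n]\setminus\{I^*\}}} \le 2\sigma$, and any output $\hat x$ with $\norm{2}{\hat x - X} \le (1+\eps)\cdot 2\sigma < 1/4$ (which holds once $\sigma$ is small enough) likewise has its largest coordinate at $I^*$. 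So rounding $\hat x$ to the index of its largest coordinate recovers $I^*$ whenever the scheme succeeds and $w$ is typical, an event of probability $\ge 1 - \delta - o(1) \ge 2/3$.

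Since $I^*$ is uniform on $[n]$ and recoverable from $Y := (Y^1,\dots,Y^R)$ with probability $\ge 2/3$, Fano's inequality gives $H(I^* \mid Y) \le 1 + \tfrac13\log n$, hence $I(I^*; Y) = H(I^*) - H(I^*\mid Y) \ge \tfrac23\log n - 1 = \Omega(\log n)$. Combining with the first paragraph, $(Cm/R)^R = \Omega(\log n)$, i.e.\ $m = \Omega\!\left(R(\log n)^{1/R}\right) = \Omega(R\log^{1/R} n)$. Finally I would minimize $g(R) := R(\log n)^{1/R} = R\exp(\tfrac1R\ln\log n)$ over $R \ge 1$: setting $R = \ln\log n$ makes $(\log n)^{1/R} = e$, giving value $e\ln\log n$, and a derivative check shows this is the minimum up to constants. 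Since a recovery scheme is free to choose $R$, the bound $m = \Omega(\log\log n)$ holds for every $R$.

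The only step needing genuine care — though it is essentially routine — is the reduction in the second paragraph: pinning down the constant $\sigma$ (as a function of $\eps$) and the ``typical $w$'' event precisely enough that the $\ell_2/\ell_2$ guarantee really does force identification of $I^*$ with constant probability. The remaining steps are just AM--GM, Fano, and a one-variable optimization applied on top of the Theorem.
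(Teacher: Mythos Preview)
Your proposal is correct and follows essentially the same approach as the paper: AM--GM for the first inequality, then the observation that correct sparse recovery identifies $I^*$ with constant probability so $I(I^*;Y)=\Omega(\log n)$, then combine and minimize over $R$. The paper outsources the $\Omega(\log n)$ step to a cited lemma (Lemma~6.3 of~\cite{HIKP12}, adapting Lemma~4.3 of~\cite{PW11}), while you spell out the reduction and Fano directly; the content is the same.
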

\begin{proof}
  The equation follows from the AM-GM inequality.  Furthermore, our
  setup is such that Bob can recover $I^*$ from $Y$ with large
  probability, so $I(I^*; Y) = \Omega(\log n)$; this was formally
  shown in Lemma 6.3 of~\cite{HIKP12} (modifying Lemma 4.3
  of~\cite{PW11} to adaptive measurements and $\eps = \Theta(1)$).
  The result follows.
\end{proof}

\section{Lower bound for dependence on $k$ and $\eps$ for $\ell_1/\ell_1$}
\label{sec:eps}
%
In Section
\ref{sec:cc1} we establish a new lower bound on the communication complexity
of a two-party communication problem that we call $\multiLinf$. 
In Section \ref{sec:cc2} we
then show how to use an adaptive $(1+\eps)$-approximate $\ell_1/\ell_1$ sparse
recovery scheme $\mathcal{A}$ 
to solve the communication problem $\multiLinf$. 
By the communication lower bound in Section \ref{sec:cc1}, 
we obtain a lower bound on the number of measurements required of $\mathcal{A}$. 

\subsection{Direct sum for distributional $\ell_{\infty}$}\label{sec:cc1}
We assume basic familiarity with communication complexity; see the
textbook of Kushilevitz and Nisan \cite{kn97} for further background.
Our reason for using communication complexity is to prove lower
bounds, and we will do so by using information-theoretic arguments. We
refer the reader to the thesis of Bar-Yossef \cite{BarYossefThesis}
for a comprehensive introduction to information-theoretic arguments
used in communication complexity.

We consider two-party randomized communication complexity. There are two parties,
Alice and Bob, with input vectors $x$ and $y$ respectively, and their goal is
to solve a promise problem $f(x,y)$. The parties have private randomness. 
The communication cost of a protocol is
its maximum transcript length, over all possible inputs and random coin tosses.
The randomized communication complexity $R_{\delta}(f)$ is 
the minimum communication cost of a randomized protocol $\Pi$ which for every input
$(x,y)$ outputs $f(x,y)$ with probability at least $1-\delta$ (over the random
coin tosses of the parties). We also study the distributional
complexity of $f$, in which the parties are deterministic and
the inputs $(x,y)$ are drawn from distribution $\mu$, and a protocol is
correct if it succeeds with probability at least $1-\delta$ in outputting $f(x,y)$,
where the probability is now taken over $(x,y) \sim \mu$. We define $D_{\mu, \delta}(f)$
to be the minimum communication cost of a correct protocol $\Pi$. 

We consider the following promise problem 
$\gaplinf^B$, where $B$ is a parameter, 
which was studied in \cite{ss02,BJKS04}. The
inputs are pairs $(x,y)$ of $m$-dimensional vectors, with
$x_i, y_i \in \{0, 1, 2, \ldots, B\}$ for all $i \in [m]$, with the promise that 
$(x,y)$ is one of the following types of instance:
\begin{itemize}
\item NO instance: for all $i$, $|x_i-y_i| \in \{0,1\}$, or
\item YES instance: there is a unique $i$ for which
  $|x_i-y_i| = B$, and for all $j \neq i$, $|x_j-y_j| \in \{0,1\}$.
\end{itemize}
The goal of a protocol is to decide which of the two cases (NO or YES) the input
is in. 

Consider the distribution $\sigma$: for each $j \in
[m]$, choose a random pair $(Z_j, P_j) \in \{0, 1, 2, \ldots, B\} \times \{0,1\}
\setminus \{(0,1), (B, 0)\}$. If $(Z_j, P_j) = (z, 0)$, then $X_j = z$ and 
$Y_j$ is uniformly distributed in $\{z, z+1\}$; if $(Z_j, P_j) = (z, 1)$, then $Y_j = z$
and $X_j$ is uniformly distributed on $\{z-1, z\}$. 
Let $Z = (Z_1, \ldots, Z_m)$ and $P = (P_1, \ldots, P_m)$. 
Next choose a random coordinate $S \in [m]$. 
For coordinate $S$, replace $(X_{S}, Y_{S})$ with a uniform element of 
$\{(0,0), (0,B)\}$.  Let $X = (X_1, \ldots, X_m)$ and $Y =
(Y_1, \ldots, Y_m)$.

Using similar arguments to those in \cite{BJKS04}, 
we can show that there are positive, sufficiently small 
constants $\delta_0$ and $C$ so that
for any randomized protocol $\Pi$ which succeeds with probability at
least $1-\delta_0$ on distribution $\sigma$,
\begin{eqnarray}\label{eqn:icost}
I(X, Y ; \Pi | Z, P) \geq \frac{Cm}{B^2},
\end{eqnarray}
where, with some abuse of notation, 
$\Pi$ is also used to denote the transcript of the corresponding randomized protocol,
and here the input $(X,Y)$ is drawn
from $\sigma$ conditioned on $(X,Y)$ being a NO instance. 
Here, $\Pi$ is randomized, and succeeds with probability at least $1-\delta_0$,
where the probability is over the joint space of the random coins of
$\Pi$ and the input distribution.

Our starting point for proving (\ref{eqn:icost}) is 
Jayram's lower bound for the conditional mutual information when the inputs
are drawn from a related distribution 
(reference [70] on p.182 of \cite{BarYossefThesis}),
but we require several non-trivial modifications to his argument 
in order to apply it to bound the conditional mutual information for our input distribution, 
which is $\sigma$ conditioned
on $(X,Y)$ being a NO instance. Essentially, we are able to show that the variation
distance between our distribution and his distribution is small, and use this to
bound the difference in the conditional mutual information between the two distributions.
The proof is rather technical, and we postpone it to Appendix \ref{app:dist}. 

%
We make a few simple refinements to (\ref{eqn:icost}).  Define the random variable
$W$ which is $1$ if $(X,Y)$ is a YES instance, and $0$ if $(X,Y)$ is a
NO instance. Then by definition of the mutual information, 
if $(X,Y)$ is drawn from
$\sigma$ without conditioning on $(X,Y)$ being a NO instance, then we have
\begin{eqnarray*}
I(X, Y ; \Pi | W, Z, P) & \geq & \frac{1}{2} I(X, Y ; \Pi | Z, P, W = 0)\\
& = & \Omega(m/B^2).
\end{eqnarray*}
Observe that
\begin{align}\label{eqn:mutInf}
I(X, Y ; \Pi | S, W, Z, P) &\geq I(X, Y ; \Pi | W, Z, P) - H(S) 
= \Omega(m/B^2),
\end{align}
where we assume that $\Omega(m/B^2) - \log m = \Omega(m/B^2)$. 
Define the constant $\delta_1 = \delta_0/4$. 
We now define a problem which involves solving $r$ copies of $\gaplinf^B$. 
%
\begin{definition}[$\multiLinf^{r,B}$ Problem]\label{def:multi} There are $r$ pairs of
  inputs $(x^1,y^1),(x^2,y^2),\ldots,(x^r,y^r)$ such that each pair
  $(x^i,y^i)$ is a legal instance of the $\gaplinf^B$ problem. Alice
  is given $x^1, \ldots, x^{r}$.  Bob is given $y^1, \ldots, y^r$.
  The goal is to output a vector $v \in \{NO,YES\}^r$, so that for at
  least a $1-\delta_1$ fraction of the entries $i$, 
  $v_i = \gaplinf^B(x^i, y^i)$.
\end{definition}
\begin{remark}
Notice that Definition \ref{def:multi} is defining a promise problem. We will 
study the distributional complexity of this problem under the distribution
$\sigma^r$, which is a product distribution on 
the $r$ instances $(x^1, y^1), (x^2, y^2), \ldots, (x^r, y^r)$. 
\end{remark}
\begin{theorem}\label{thm:theMain}
$D_{\sigma^r, \delta_1}(\multiLinf^{r,B}) = \Omega(rm/B^2).$
\end{theorem}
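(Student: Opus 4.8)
The plan is to prove the direct sum theorem for $\multiLinf^{r,B}$ via the information complexity framework, reducing to the single-instance bound~\eqref{eqn:mutInf}. I would let $\Pi$ be any deterministic protocol that is correct on $\sigma^r$ with probability $1-\delta_1$, and consider the inputs $(X^1,Y^1),\dotsc,(X^r,Y^r)$ drawn from $\sigma^r$, together with the help variables $(S^i, W^i, Z^i, P^i)$ for each copy. Writing $\mathbf{X}=(X^1,\dotsc,X^r)$, etc., the first step is the standard superadditivity of conditional information under product distributions: since the copies are independent and the help variables fully ``decouple'' the transcript's dependence across copies,
\[
I(\mathbf{X},\mathbf{Y}\, ;\, \Pi \mid \mathbf{S},\mathbf{W},\mathbf{Z},\mathbf{P}) \;\geq\; \sum_{i=1}^r I(X^i, Y^i\, ;\, \Pi \mid S^i, W^i, Z^i, P^i),
\]
where on the right we further condition on all the other copies' inputs and help variables (folded into the conditioning since they are independent of copy $i$). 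This is the usual direct-sum decomposition à la \cite{BJKS04}; the left side is at most the communication cost of $\Pi$.

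The second and crucial step is to argue that each term on the right is $\Omega(m/B^2)$, by embedding a single $\gaplinf^B$ instance into the $i$-th slot. Given a single instance $(x,y)\sim\sigma$ with help variables $(s,w,z,p)$, Alice and Bob use private randomness to sample the remaining $r-1$ instances from $\sigma$ (this requires shared randomness, which we can condition on / fix, or handle via the public-coin-to-private-coin conversion since we are proving a distributional bound), place $(x,y)$ in coordinate $i$, and run $\Pi$. The output of $\Pi$ restricted to coordinate $i$ agrees with $\gaplinf^B(x,y)$ with good probability: since $\Pi$ errs on at most a $\delta_1=\delta_0/4$ fraction of coordinates in expectation over $\sigma^r$, by Markov and averaging over the choice of $i$, the induced single-instance protocol is correct with probability at least $1-\delta_0$ — this is exactly where the slack between $\delta_1$ and $\delta_0$ is used. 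Hence \eqref{eqn:mutInf} applies and gives $I(X^i,Y^i\,;\,\Pi\mid S^i,W^i,Z^i,P^i)=\Omega(m/B^2)$. Summing over the $r$ copies yields the communication cost is $\Omega(rm/B^2)$.

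The main obstacle I anticipate is making the reduction in the second step genuinely rigorous: one must verify that the help variables $(S^i,W^i,Z^i,P^i)$ carry exactly the information needed so that (a) the players can actually perform the embedding of a single external instance into coordinate $i$ using only private randomness plus the help variables as conditioning, and (b) conditioning on them does not destroy the single-instance lower bound — i.e., that the distribution of $\Pi$ given coordinate $i$'s input and help variables, marginalized appropriately, is a valid randomized protocol for $\gaplinf^B$ on $\sigma$. The help variable $S^i$ (the special coordinate within the $i$-th block) is precisely the extra conditioning beyond what \cite{BJKS04} needs for the inner $\gaplinf^B$ bound, and one has to check the mutual-information bookkeeping (chain rule, dropping of independent terms, the $-H(S)$ correction already absorbed in \eqref{eqn:mutInf}) goes through cleanly. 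I would also need to confirm that correctness ``for at least a $1-\delta_1$ fraction of entries'' translates, via linearity of expectation over the uniformly random embedding coordinate, into per-coordinate correctness $1-\delta_0$ — the constant-chasing is routine but must be done carefully to stay within the regime where \eqref{eqn:mutInf} holds.
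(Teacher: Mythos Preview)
Your proposal is correct and follows essentially the same approach as the paper: decompose the total conditional information via the chain rule over the $r$ copies, then embed a single $\gaplinf^B$ instance into each coordinate and invoke~\eqref{eqn:mutInf}. The paper's execution differs only in details you already flag as needing care --- it conditions on $X^{<j},Y^{<j}$ (not $X^{-j},Y^{-j}$) from the chain rule, and runs a two-level Markov argument (first isolating $\Omega(r)$ ``good'' indices $j$ with per-coordinate error $\leq 2\delta_0/3$, then for each such $j$ isolating an $\Omega(1)$ fraction of ``good'' conditioning tuples $(x,y,s,w,z,p)$ with error $\leq \delta_0$) so that the embedded single-coordinate protocol meets the hypothesis of~\eqref{eqn:mutInf}.
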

\begin{proof}
  Let $\Pi$ be any deterministic
  protocol for $\multiLinf^{r,B}$ which succeeds with probability at
  least $1-\delta_1$ in solving $\multiLinf^{r,B}$ when the inputs are
  drawn from $\sigma^r$, where the probability is taken over the input
  distribution. We show that $\Pi$ has communication cost
  $\Omega(rm/B^2)$.

  Let $X^1, Y^1, S^1, W^1, Z^1, P^1 \ldots, X^r, Y^r, S^r, W^r, Z^r,$ and $P^r$
  be the random variables associated with $\sigma^r$, i.e., 
  $X^j, Y^j, S^j, W^j, P^j$ and $Z^j$ correspond to the random variables
  $X, Y, S, W,Z,P$ associated with the $j$-th independent instance drawn
  according to $\sigma$, defined above. We let
  $X = (X^1, \ldots, X^r)$, $X^{< j} = (X^1, \ldots, X^{j-1})$, and
  $X^{-j}$ equal $X$ without $X^j$. Similarly we define these
  vectors for $Y, S, W, Z$ and $P$.
 
By the chain rule for mutual information, 
$I(X^1, \ldots, X^r, Y^1, \ldots, Y^r ; \Pi | S, W, Z, P)$
is equal to $\sum_{j=1}^r I(X^j, Y^j ; \Pi | X^{< j}, Y^{<j}, S, W, Z, P).$
Let $V$ be the output of $\Pi$, and $V_j$ be its $j$-th coordinate. 
For a value $j \in [r]$, we say that $j$ is {\it good} if
$\Pr_{X, Y}[V_j = \gaplinf^{B}(X^j, Y^j)] \geq 1-\frac{2\delta_0}{3}.$
Since $\Pi$ succeeds with probability at least $1-\delta_1 = 1-\delta_0/4$ in
outputting a vector with at least a $1-\delta_0/4$ fraction of
correct entries, the expected probability of success over a random $j \in [r]$
is at least $1-\delta_0/2$, and so by a
Markov argument, there are $\Omega(r)$ good indices $j$.

Fix a value of $j \in [r]$ that is good, and consider $I(X^j, Y^j ; \Pi
| X^{< j}, Y^{<j}, S, W, Z, P)$.  By expanding the conditioning, $I(X^j,
Y^j ; \Pi | X^{< j}, Y^{<j}, S, W, Z, P)$ is equal to
\begin{eqnarray}\label{eqn:condition}
{\bf E}_{x,y,s,w,z,p} [I(X^j , Y^j ; \Pi 
\mid (X^{< j}, Y^{<j}, S^{-j}, W^{-j}, Z^{-j}, P^{-j}) = 
(x, y, s, w, z, p), S^j, W^j, Z^j, P^j)].
\end{eqnarray}
For each $x,y,s,w,z,p$, define a randomized protocol
$\Pi_{x,y,s,w,z,p}$ for $\gaplinf^B$ under distribution $\sigma$.
Suppose that Alice is given $a$ and Bob is given $b$, where $(a,b) \sim
\sigma$. Alice sets $X^j = a$, while Bob sets $Y^j = b$. Alice and Bob
use $x,y,s,w,z$ and $p$ to set their remaining inputs as follows.
Alice sets $X^{<j} = x$ and Bob sets $Y^{<j} = y$.  Alice and
Bob can randomly set their remaining inputs without any communication, 
since for $j' > j$, conditioned on $S^{j'}, W^{j'},Z^{j'}$, and $P^{j'}$,
Alice and Bob's inputs are independent.  
Alice and Bob run $\Pi$ on
inputs $X, Y$, and define $\Pi_{x,y,s,w,z,p}(a,b) = V_j.$ 
We say a tuple $(x,y,s,w,z,p)$ is {\it good} if
\begin{align*}
\Pr_{X,Y}[V_j = \gaplinf^{B}(X^j,Y^j) \ \mid \ X^{<j} = x, Y^{<j} = y, 
S^{-j} = s, \ifconf \\ \fi W^{-j} = w, Z^{-j} = z, P^{-j} = p] \geq 1-\delta_0.
\end{align*}
By a Markov argument, and using that $j$ is good, we have 
$\Pr_{x,y,s,w,z,p}[(x,y,s,w,z,p) \textrm{ is good }] = \Omega(1).$
Plugging into (\ref{eqn:condition}), 
$I(X^j, Y^j ; \Pi | X^{< j}, Y^{<j}, S, W, Z, P)$ is at least a constant times
\begin{align*}
{\bf E}_{x,y,s,w,z,p} [I(X^j Y^j ; \Pi |
 &
(X^{< j},Y^{<j}, S^{-j},W^{-j},Z^{-j},P^{-j})  = (x, y, s, w, z, p),
 \\ &
S^j, W^j, Z^j, P^j, (x,y, s,w,z, p)
\textrm{ is good})].
\end{align*} For any $(x, y, s, w, z, p)$ that
is good, $\Pi_{x,y, s,w,z,p}(a,b) = V_j$ with
probability at least $1-\delta_0$, over the joint distribution of the
randomness of $\Pi_{x,y,s,w,z,p}$ and $(a,b) \sim \sigma$. 
By (\ref{eqn:mutInf}),
\begin{align*}
{\bf E}_{x,y,s,w,z,p} [I(X^j, Y^j ; \Pi | 
&
(X^{< j}, Y^{<j}, S^{-j}, W^{-j}, Z^{-j}, P^{-j}) = 
(x, y, s, w, z, p),
\\&
S^j, W^j, Z^j, P^j, (x,y,s,w,z,p) \textrm{ is good}]
= \Omega \left (\frac{m}{B^2} \right ).
\end{align*}
Since there are $\Omega(r)$ good indices $j$, we have 
$I(X^1, \ldots, X^r ; \Pi | S, W, Z, P) = \Omega(mr/B^2).$
Since the distributional complexity $D_{\sigma^r, \delta_1}(\multiLinf^{r,B})$ is at least the
minimum of $I(X^1, \ldots, X^r ; \Pi | S, W, Z, P)$ over 
deterministic protocols $\Pi$ which succeed with probability at least
$1-\delta_1$ on input distribution $\sigma^r$, it follows that
$D_{\sigma^r, \delta_1}(\multiLinf^{r,B}) = \Omega(mr/B^2)$.
\end{proof}

\subsection{The overall lower bound}\label{sec:cc2}
We use the theorem in the previous subsection with an extension of the method of section 6.3
of \cite{PW11}. 

  Let $X \subset \R^n$ be a distribution with $x_i \in \{-n^d, \dotsc,
  n^d\}$ for all $i \in [n]$ and $x \in X$.  Here $d = \Theta(1)$ is a parameter. 
Given an adaptive compressed 
  sensing scheme $\mathcal{A}$, we define a
  $(1+\eps)$-approximate $\ell_1/\ell_1$ sparse recovery
  {\it multiround bit scheme} on $X$ as follows. 

Let $A^i$ be the $i$-th (adaptively chosen) measurement matrix of the compressed sensing scheme. 
We may assume that the union of rows in matrices $A^1, \ldots, A^r$ generated by $\mathcal{A}$
is an orthonormal system, since the rows can be orthogonalized in a post-processing step. 
We can assume that $r \leq n$. 

  Choose a random $u \in \mathbb{R}^n$ 
from distribution $\mathcal{N}(0, \frac{1}{n^c} \cdot I_{n \times n})$,
  where $c = \Theta(1)$ is a parameter.  
  We require that the compressed sensing scheme outputs a valid result of 
  $(1+\eps)$-approximate recovery on $x+u$ with probability at least $1-\delta$, 
  over the choice of $u$ and its random coins. By Yao's minimax principle, we can fix the 
  randomness of the compressed sensing scheme and
  assume that the scheme is deterministic. 

  Let $B^{1}$ be the matrix
  $A^1$ with entries rounded to $t \log n$ bits for a parameter $t = \Theta(1)$. 
  We compute $B^1x$. Then, we compute 
  $B^1x + A^1u$. From this, we compute $A^2$, 
   using the algorithm specified by $\mathcal{A}$ as if $B^1x + A^1u$ were equal to 
   $A^1x'$ for some $x'$. For this, we use the following lemma, which is Lemma~5.1 of~\cite{DIPW}.
\begin{lemma}\label{lem:roundingFix}
Consider any $m \times n$ matrix $A$ with orthonormal rows. Let $B$ be the result
of rounding $A$ to $b$ bits per entry. Then for any $v \in \mathbb{R}^n$ there
exists an $s \in \mathbb{R}^n$ with $Bv = A(v-s)$ and $\|s\|_1 < n^2 2^{-b} \|v\|_1$.
\end{lemma}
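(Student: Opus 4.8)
The plan is to produce the vector $s$ explicitly using the (right) pseudoinverse of $A$. Set $E = A - B$, the matrix of rounding errors; since the rows of $A$ are orthonormal, each entry of $A$ has magnitude at most $1$, so rounding to $b$ bits gives $\abs{E_{ij}} \le 2^{-b}$ for every $i, j$. We need $s$ with $Bv = A(v - s)$, i.e.\ $As = (A - B)v = Ev$. Because $A$ has orthonormal rows we have $A A^T = I_m$, so I would take
\[
s := A^T E v \in \R^n,
\]
which indeed gives $As = A A^T E v = E v$, and hence $A(v-s) = Av - Ev = Bv$.

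It then remains to bound $\norm{1}{s}$, which is a two-line estimate using only the crude entrywise bounds $\abs{A_{ij}} \le 1$ and $\abs{E_{ij}} \le 2^{-b}$. First,
\[
\norm{1}{Ev} = \sum_{i=1}^m \Bigl| \sum_{j=1}^n E_{ij} v_j \Bigr| \le \sum_{i,j} \abs{E_{ij}}\, \abs{v_j} \le m \, 2^{-b} \norm{1}{v}.
\]
Second, for any $w \in \R^m$,
\[
\norm{1}{A^T w} = \sum_{j=1}^n \Bigl| \sum_{i=1}^m A_{ij} w_i \Bigr| \le \sum_{i,j} \abs{A_{ij}}\, \abs{w_i} \le n \norm{1}{w}.
\]
Composing the two bounds and using $m \le n$ gives
\[
\norm{1}{s} = \norm{1}{A^T E v} \le n \norm{1}{Ev} \le n m \, 2^{-b} \norm{1}{v} \le n^2 2^{-b} \norm{1}{v},
\]
as claimed.

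There is essentially no obstacle here; the only point to pin down is the rounding convention, so that the per-coordinate error of $B$ is at most $2^{-b}$ (round-to-nearest in fact gives $2^{-b-1}$, comfortably yielding the strict inequality in the statement). Since this lemma is quoted verbatim as Lemma~5.1 of~\cite{DIPW}, in the write-up one would simply cite it, but the argument above is the short self-contained proof.
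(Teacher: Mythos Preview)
Your proof is correct. The paper does not prove this lemma at all; it is imported verbatim as Lemma~5.1 of~\cite{DIPW}, exactly as you note. Your argument---taking $s = A^T(A-B)v$ via the right inverse $AA^T = I_m$ and then bounding $\norm{1}{s}$ by the crude entrywise estimates $\abs{A_{ij}}\le 1$, $\abs{(A-B)_{ij}}\le 2^{-b}$---is the standard short proof, and nothing more is needed.
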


  In general for $i \geq 2$, given $B^1x + A^1u, B^2x + A^2u, \ldots, B^{i-1}x + A^{i-1}u$ we 
  compute $A^i$, and round to $t\log n$
  bits per entry to get $B^i$.
  The output of the multiround bit scheme is the same as that of 
  the compressed sensing scheme. If the compressed sensing scheme uses $r$ rounds, then the multiround
  bit scheme uses $r$ rounds. Let $b$ denote the total number of bits in the concatenation of discrete
  vectors $B^1x, B^2x, \ldots, B^rx$. 

We give a generalization of Lemma 5.2 of \cite{PW11} which relates bit
schemes to sparse recovery schemes. Here we need to generalize the relation
from non-adaptive schemes to adaptive schemes, using Gaussian noise instead of uniform noise,
and arguing about multiple rounds of the algorithm. 

\begin{lemma}\label{lem:bits}
  For $t = O(1 + c +d)$, a lower bound of $\Omega(b)$ bits for a multiround bit scheme 
  with error probability at most $\delta+1/n$ implies a lower bound of
  $\Omega(b/((1+c+d)\log n))$ measurements for 
  $(1+\eps)$-approximate sparse recovery schemes with failure probability
  at most $\delta$.
\end{lemma}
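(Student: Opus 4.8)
The plan is to carefully track what happens when we replace the exact measurements $A^ix$ by the rounded measurements $B^ix$, and when we add the Gaussian noise $u$, and then appeal to Lemma~\ref{lem:roundingFix} to absorb all rounding error into a small additive perturbation of the signal. Concretely, I would first fix the deterministic compressed sensing scheme $\mathcal{A}$ (via Yao) and consider the execution of the multiround bit scheme. By Lemma~\ref{lem:roundingFix} applied in round $i$, there is a vector $s^i$ with $B^i v = A^i(v - s^i)$ and $\norm{1}{s^i} < n^2 2^{-t\log n}\norm{1}{v}$, where $v$ is the argument fed into round $i$ (which is $x+u$ plus accumulated perturbations from earlier rounds). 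Since $x$ has entries in $\{-n^d,\dots,n^d\}$ and $u$ is Gaussian with tiny variance $n^{-c}$, we have $\norm{1}{x+u} \le \mathrm{poly}(n)$ except with probability $\le 1/n$ (a crude tail bound on $\norm{1}{u}$ suffices). Choosing $t = O(1+c+d)$ makes $n^2 2^{-t\log n}\norm{1}{v} \le n^{-\Omega(1)}$ small enough that the total perturbation $s = \sum_i s^i$ over all $r \le n$ rounds still satisfies $\norm{1}{s} \le n^{-c'}$ for a constant $c'$ we can make as large as we like by increasing $t$.

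The key point is then that the bit scheme's entire transcript $(B^1x, \dots, B^rx)$ is a deterministic function of $B^1x + A^1u, \dots, B^rx + A^ru$ together with the (fixed) scheme $\mathcal{A}$; indeed this is exactly how the bit scheme was defined — it simulates $\mathcal{A}$ as if $B^ix + A^iu$ were $A^i x'$ for the perturbed signal $x' = x + u - s$. So the output $\hat{x}$ of the bit scheme equals the output of $\mathcal{A}$ run honestly on $x' = x+u-s$, where $\norm{1}{u-s} \le \norm{1}{u} + \norm{1}{s}$ is still at most $\eta$ for an arbitrarily small polynomial $\eta = n^{-\Omega(1)}$, except with probability $\le 1/n$ over $u$. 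Since $\mathcal{A}$ is a correct $(1+\eps)$-approximate $\ell_1/\ell_1$ scheme with failure probability $\le \delta$ on $x'$, and since replacing the target by $x$ instead of $x'$ changes both $\norm{1}{\hat x - \cdot}$ and $\min_{k\text{-sparse}}\norm{1}{\cdot - \cdot}$ by at most $\norm{1}{u-s} \le \eta$, the bit scheme is a correct $(1+\eps')$-approximate scheme on $X$ with failure probability at most $\delta + 1/n$, where $\eps'$ is $\eps$ up to a factor that can be made $(1+o(1))$ by taking $\eta$ small relative to the signal scale — here one uses that on the hard instances the relevant norms are $\Omega(1)$ or larger, so additive $n^{-\Omega(1)}$ error is negligible. (If needed, one rescales so that $\eps' \le \eps$ outright, at the cost of constants absorbed into $t$.)

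This gives the contrapositive: if there were a multiround bit scheme using $b$ bits total with error $\le \delta + 1/n$, then unwinding the argument in the other direction — each measurement $A^ix$ rounded to $t\log n$ bits contributes at most $t \log n$ bits per row — the total bit count is $b \le m \cdot t\log n = O(m(1+c+d)\log n)$, so $m = \Omega(b/((1+c+d)\log n))$ measurements are needed for $(1+\eps)$-approximate sparse recovery with failure probability $\le \delta$. The direction we actually want for the lower bound application is precisely this: a $\Omega(b)$ bit lower bound for bit schemes yields a $\Omega(b/((1+c+d)\log n))$ measurement lower bound.

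The main obstacle I expect is the bookkeeping in the adaptive setting: because $B^i$ depends on $B^1x + A^1u, \dots, B^{i-1}x + A^{i-1}u$, the perturbation $s^i$ in round $i$ is applied to a vector that already contains the perturbations $s^1, \dots, s^{i-1}$ from previous rounds, so one must verify that these do not compound badly. The saving grace is that each $\norm{1}{s^i}$ is bounded by $n^{-\Omega(t)}$ times a polynomial in $n$ (the $\ell_1$ norm of the current argument, which stays polynomially bounded since we only ever add vectors of norm $\le \mathrm{poly}(n)$), and there are only $r \le n$ rounds, so the geometric-like blowup is still only polynomial and is killed by choosing $t$ a large enough constant multiple of $1+c+d$. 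A second, more subtle point is ensuring the noise $u$ genuinely has to be Gaussian rather than uniform for this particular lemma — here it is only used for its small $\ell_1$ tail and for compatibility with how the noise is handled downstream in Section~\ref{sec:cc2}, so a clean tail bound $\Pr[\norm{1}{u} > n^{-c/2}] \le 1/n$ (valid for suitable $c = \Theta(1)$) is all that is required.
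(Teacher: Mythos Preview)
Your argument has a genuine gap at the step where you assert ``$\mathcal{A}$ is a correct $(1+\eps)$-approximate $\ell_1/\ell_1$ scheme with failure probability $\le \delta$ on $x'$.'' After applying Yao, $\mathcal{A}$ is deterministic, and its guarantee is that for a $1-\delta$ fraction of Gaussian vectors $u$, the output on $x+u$ is correct. Your $x' = x+u-s$ has $s = s(u)$ depending on $u$ through the adaptively chosen matrices $A^2, A^3, \dotsc$ (each $A^i$ is a function of the earlier observations $B^jx + A^ju$, hence of $u$). So $u - s(u)$ is \emph{not} distributed as a Gaussian, and you have no control over $\Pr_u[\mathcal{A}(x+u-s(u))\text{ correct}]$. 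An adversarial $\mathcal{A}$ could in principle choose its adaptive measurements so that the rounding perturbation systematically pushes $u-s(u)$ toward the small ``bad'' set where $\mathcal{A}$ fails. Bounding $\norm{1}{s}$ alone does not rule this out.

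The paper closes this gap by a different mechanism. Rather than viewing the bit scheme as $\mathcal{A}$ run on a single perturbed input, it couples the bit scheme's transcript to the honest transcript of $\mathcal{A}$ on $x+u$ round by round via total variation distance between shifted Gaussians (Fact~\ref{fact:gaussian}). The crucial point is that, conditioned on $A^1u,\dotsc,A^{i-1}u$ (which determine $A^i$ and $B^i$), the projected noise $A^iu$ is a \emph{fresh} Gaussian independent of the past---this is exactly where orthonormality of the rows together with rotational invariance of the Gaussian is used. That independence lets one absorb the fixed shift $(A^i-B^i)x$ into the fresh noise at cost $1/n^2$ in TV per round, and a union bound over $r\le n$ rounds finishes. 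Your closing remark that Gaussianity is only used for a tail bound is therefore incorrect: it is the rotational invariance, not the tail, that makes the adaptive argument go through.
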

\begin{proof}
  Let $\mathcal{A}$ be a $(1+\eps)$-approximate adaptive compressed
  sensing scheme with failure probability $\delta$. We will show that
  the associated multiround bit scheme has failure probability $\delta
  + 1/n$.
%

By Lemma \ref{lem:roundingFix}, for any vector $x \in \{-n^d, \ldots, n^d\}$ 
we have $B^1x = A^1(x+s)$ for a vector $s$ with
  $\norm{1}{s} \leq n^22^{-t\log n}\norm{1}{x}$, so $\norm{2}{s} \leq
  n^{2.5-t} \norm{2}{x} \leq n^{3.5+d-t}$. 
Notice that $u+s \sim \mathcal{N}(s, \frac{1}{n^c} \cdot I_{n \times n})$.
We use the following quick suboptimal upper bound on the 
statistical distance between two univariate normal
distributions, which suffices for our purposes. 
\begin{fact}(see section 3 of \cite{p05})\label{fact:gaussian}
The variation distance between $\mathcal{N}(\theta_1, 1)$ and $\mathcal{N}(\theta_2,1)$ is $\frac{4\tau}{\sqrt{2\pi}} + O(\tau^2),$
where $\tau = |\theta_1 - \theta_2|/2$. 
\end{fact}
It follows by Fact \ref{fact:gaussian} and independence across coordinates, that the variation
distance between $\mathcal{N}(0, \frac{1}{n^c} \cdot I_{n \times n})$ and 
$\mathcal{N}(s, \frac{1}{n^c} \cdot I_{n \times n})$ is the same as that between
$\mathcal{N}(0, I_{n \times n})$ and $\mathcal{N}(s \cdot n^{c/2}, I_{n \times n})$, which can be upper-bounded
as
\begin{eqnarray*}
\sum_{i=1}^n \cdot \frac{2n^{c/2} |s_i|}{\sqrt{2\pi}} + O(n^c s_i^2) & = & O(n^{c/2} \|s\|_1 + n^c \|s\|_2^2)\\ 
& = & O(n^{c/2} \cdot \sqrt{n} \|s\|_2 + n^c \|s\|_2^2)\\
 & = & O(n^{c/2 + 4+d-t} + n^{c + 7+2d-2t}).
\end{eqnarray*}
%
It follows that for $t = O(1 + c + d)$, the variation distance is at most $1/n^2$. 

  Therefore, if $\mathcal{T}^1$ is the algorithm which
  takes $A^1(x+u)$ and produces $A^2$, then 
  $\mathcal{T}^1(A^1(x+u)) = \mathcal{T}^1(B^1x + A^1u)$ with probability
  at least $1-1/n^2$. This follows since $B^1x+A^1u = A^1(x + u + s)$
  and $u+s$ and $u$ have variation distance at most $1/n^2$. 

In the second round, $B^2 x + A^2u$ is obtained,
  and importantly we have for the algorithm $\mathcal{T}^2$ in the second round,
  $\mathcal{T}^2(A^2(x+u)) = \mathcal{T}^2(B^2x + A^2u)$ with probability
  at least $1-1/n^2$. This follows
  since $A^2$ is a deterministic function of $A^1u$, and $A^1u$ and $A^2u$ are independent
  since $A^1$ and $A^2$ are orthonormal while $u$ is a vector of i.i.d. Gaussians (here we
  use the rotational invariance / symmetry of Gaussian space). 
  It follows by induction that with probability at least $1-r/n^2 \geq 1-1/n$, the output
  of the multiround bit scheme agrees with that of $\mathcal{A}$ on input $x + u$.

  Hence, if $m_i$ is the number of measurements in round $i$, and $m =
  \sum_{i=1}^r m_i$, then we have a multiround bit scheme using a
  total of $b = mt\log n = O(m(1 + c+d) \log n)$ bits and with failure
  probability $\delta + 1/n$.
%
\end{proof}
The rest of the proof is similar to the proof of the non-adaptive lower bound for $\ell_1/\ell_1$
sparse recovery given in \cite{PW11}. We sketch the proof, referring the reader to \cite{PW11}
for some of the details.
Fix parameters $B = \Theta(1/\eps^{1/2})$, $r = k$, $m =
1/\eps^{3/2}$, and $n = k/\eps^3$. Given an instance $(x^1, y^1),
\ldots, (x^r, y^r)$ of $\multiLinf^{r,B}$ we define the input signal
$z$ to a sparse recovery problem. We allocate a set $S^i$ of $m$
disjoint coordinates in a universe of size $n$ for each pair $(x^i,
y^i)$, and on these coordinates place the vector $y^i-x^i$.  The
locations turn out to be essential for the proof of Lemma \ref{lem:noise} below, 
and are placed uniformly at random among the $n$ total coordinates (subject to the
constraint that the $S^i$ are disjoint). 
Let $\rho$ be the induced distribution on $z$.

Fix a $(1+\eps)$-approximate $k$-sparse recovery multiround bit scheme $Alg$ that
uses $b$ bits and succeeds with
probability at least $1-\delta_1/2$ over $z \sim \rho$.  
Let $S$ be the set of top
$k$ coordinates in $z$. As shown in equation (14) of \cite{PW11},
$Alg$ has the guarantee that if $w = Alg(z)$, then
\begin{eqnarray}\label{eqn:main}
\|(w-z)_S\|_1 + \|(w-z)_{[n] \setminus S}\|_1 \leq (1+2\eps)\|z_{[n] \setminus S}\|_1.
\end{eqnarray}
(the $1+2\eps$ instead of the $1+\eps$ factor is to handle the rounding of entries of
the $A^i$ and the noise vector $u$). 
Next is our generalization of Lemma 6.8 of \cite{PW11}.

\begin{lemma}\label{lem:oneSide}
For $B = \Theta(1/\eps^{1/2})$ sufficiently large, suppose that 
$$\Pr_{z \sim \rho}[\|(w-z)_S\|_1 \leq 10\eps \cdot \|z_{[n] \setminus S}\|_1] \geq 1-\frac{\delta_1}{2}.$$ 
Then $Alg$ requires $b = \Omega(k/\eps^{1/2})$. 
\end{lemma}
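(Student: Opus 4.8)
The plan is to turn any bit scheme $Alg$ satisfying the hypothesis into a communication protocol for $\multiLinf^{r,B}$ under $\sigma^r$ of cost $O(b)$, so that Theorem~\ref{thm:theMain} forces $O(b) = \Omega(rm/B^2) = \Omega(k/\sqrt\eps)$ for the fixed parameters $B=\Theta(1/\sqrt\eps)$, $r=k$, $m=1/\eps^{3/2}$, $n=k/\eps^3$. Given inputs $(x^1,\dots,x^r),(y^1,\dots,y^r)\sim\sigma^r$, Alice and Bob use public coins to sample the disjoint blocks $S^1,\dots,S^r$ and the noise $u$, which fixes $z\sim\rho$ (with $y^i-x^i$ on block $S^i$) and $z+u$; writing $z=z^B-z^A$ with $z^A$ known to Alice and $z^B$ to Bob, they simulate $Alg$ round by round. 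By induction both parties know all earlier observations, hence the rounded measurement matrix $B^i$; Alice sends the discrete vector $B^iz^A$ and Bob sends $B^iz^B$, so both reconstruct $B^iz$, add the publicly computable $A^iu$, and obtain the $i$-th observation. Each such vector has $O((1+c+d)\log n)$ bits per entry and there are $m$ rows overall, so the communication is $O(m(1+c+d)\log n)=O(b)$; at the end Bob holds $w=Alg(z)$.

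For decoding I additionally give the protocol, as advice, the positions of the $r$ special coordinates of the $\sigma$-instances. This is free: the proof of Theorem~\ref{thm:theMain} lower-bounds the conditional mutual information $I(X;\Pi\mid S,W,Z,P)$, which bounds the cost of any protocol, including one that already knows the special coordinates. Let $\ell_i\in[n]$ be the coordinate occupying, inside block $S^i$, the special position of instance $i$; the protocol outputs $v_i=\mathrm{YES}$ iff $|w_{\ell_i}|>B/2$ (no further communication, since Bob knows $w$ and the public data). Condition on the high-probability event that $w$ obeys~\eqref{eqn:main}, that $\|(w-z)_S\|_1\le 10\eps\|z_{[n]\setminus S]}\|_1$ (with $S$ the top-$k$ set of $z$), and that the standard concentration for $z\sim\rho$ holds: $\|z_{[n]\setminus S}\|_1=\Theta(rm)$ and every block contains $\Theta(m)$ zero coordinates. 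The set $S$ contains every coordinate with $|z_j|=B$, since there are at most $r=k$ of these (one per YES instance) and all other entries have magnitude at most $1$; with the hypothesis this gives $\|(w-z)_S\|_1=O(\eps rm)=O(k/\sqrt\eps)$ and $\|w-z\|_1=O(rm)=O(k/\eps^{3/2})$.

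On a YES instance $\ell_i$ is the heavy coordinate, so $|z_{\ell_i}|=B$ and $\ell_i\in S$; hence $\sum_{i\ \mathrm{YES}}|w_{\ell_i}-z_{\ell_i}|\le\|(w-z)_S\|_1=O(k/\sqrt\eps)$, and all but $O((k/\sqrt\eps)/B)=O(k/c)$ of the YES instances satisfy $|w_{\ell_i}|>B/2$ and are decoded correctly; this is at most $\delta_1 r/4$ errors once the constant in $B=\Theta(1/\sqrt\eps)$ is large. On a NO instance $z_{\ell_i}=0$, so $v_i$ errs only if $\ell_i$ lies in $G_i:=\{j\in S^i: z_j=0,\ |w_j|>B/2\}$; each such coordinate contributes more than $B/2-1$ to $\|w-z\|_1$, so $\sum_i|G_i|=O(rm/B)$. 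The crucial point is that inside a NO block the value sequence of $z$ does not depend on which of its zero coordinates is the special one — conditioned on that value sequence, the special coordinate is uniform over the $\Theta(m)$ zero coordinates of the block — and $Alg$ sees only linear measurements of $z+u$, so $w$ (hence $G_i$) is independent of the special coordinate given the value sequences; thus $\Pr[\ell_i\in G_i\mid\text{rest}]=O(|G_i|/m)$. Summing, the expected number of misclassified NO instances is $O(\frac1m\sum_i|G_i|)=O(r/B)=O(r\sqrt\eps/c)$, so by Markov it exceeds $\delta_1 r/2$ only with probability $O(\sqrt\eps/(c\delta_1))$. Taking the constant in $B$ large and $\eps$ small, a union bound over all these events keeps the protocol's failure probability below $\delta_1$, so it solves $\multiLinf^{r,B}$ under $\sigma^r$; Theorem~\ref{thm:theMain} then gives $O(b)=\Omega(rm/B^2)=\Omega(k/\sqrt\eps)$, i.e.\ $b=\Omega(k/\sqrt\eps)$.

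\textbf{Main obstacle.} The delicate step is the NO-instance analysis. A scheme consistent with~\eqref{eqn:main} still has an $\ell_1$ error budget of order $rm$ outside the top-$k$ set, which is $\gg r$ times the detection gap $B$, so it could plant a spurious entry of size larger than $B/2$ in every NO block; any decoding rule that is fooled by a single large wrong entry in a block therefore fails against an adversarial $Alg$. What rescues the argument is that $Alg$ cannot aim such an entry at the (advice-revealed) special coordinate of a NO block, because that coordinate is a uniformly random one of the $\Theta(m)$ zero coordinates of the block and is statistically invisible to a scheme that only observes linear measurements; hence only an $O(1/B)=O(\sqrt\eps)$ fraction of NO blocks are corrupted at the inspected coordinate. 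Making the conditional-independence statement precise from the definition of $\sigma$, and proving the per-block concentration of the number of zero coordinates, is the technical heart of this case; the complementary case, where $Alg$ is inaccurate on the top-$k$ set, is handled by a separate argument.
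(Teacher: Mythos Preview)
Your protocol simulation and the YES-instance analysis are fine, but the advice step is a genuine gap. You claim that because the proof of Theorem~\ref{thm:theMain} lower-bounds $I(X,Y;\Pi\mid S,W,Z,P)$, the bound applies even when the players are handed $S$ for free. It does not: once Bob knows the special index $S^i$ of instance $i$, he can read off the answer directly from his own input, since under $\sigma$ one has $(X^i_{S^i},Y^i_{S^i})\in\{(0,0),(0,B)\}$, so $Y^i_{S^i}=B$ iff the instance is YES. With advice $S$, $\multiLinf^{r,B}$ is therefore solvable with zero communication, and no nontrivial lower bound can hold. More precisely, the inequality $|\Pi|\geq I(X,Y;\Pi\mid S,W,Z,P)$ is always true, but the proof that the right-hand side is $\Omega(rm/B^2)$ proceeds via the direct-sum reduction, which must build a single-instance protocol computing $V_j$ from $(a,b)$ together with the fixed help variables $(x,y,s,w,z,p)$ for the other coordinates; if $V_j$ depends on the $j$-th instance's hidden variable $S^j$, that single-instance protocol cannot be formed, and the reduction breaks.

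The paper avoids this by using a decoding rule that does not need $S$: declare block $i$ YES iff some coordinate $j\in S^i$ has $|w_j|\geq B/2$, which Bob can compute from $w$ and the public block placement alone. You correctly identify that the naive budget argument fails for this rule on NO blocks---the error $\|(w-z)_{[n]\setminus S}\|_1=\Theta(rm)$ permits $\Theta(rm/B)\gg r$ spurious large entries---and the paper simply invokes Lemma~6.8 of~\cite{PW11} for the required counting. Your conditional-uniformity insight is the right kind of tool; it just has to be applied to randomness that, even when combined with either party's input, does not determine the answer.
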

\begin{proof}
We show how to use $Alg$ to solve instances of $\multiLinf^{r,B}$
  with probability at least $1-\delta_1$, where the probability is
  over input instances to $\multiLinf^{r,B}$ distributed according to
  $\sigma^r$, inducing the distribution $\rho$ on $z$. The lower bound will
  follow by Theorem \ref{thm:theMain}. Let $w$ be the output of 
  $Alg$. 

  Given $x^1, \ldots, x^r$, Alice places $-x^i$ on the appropriate
  coordinates in the set $S^i$ used in defining $z$, obtaining a
  vector $z_{Alice}$. 
  Given $y^1, \ldots, y^r$, Bob places the $y^i$ on the appropriate
  coordinates in $S^i$. He thus creates a vector $z_{Bob}$ for which
  $z_{Alice} + z_{Bob} = z$. In round $i$, Alice transmits $B^iz_{Alice}$
  to Bob, who computes $B^i(z_{Alice} + z_{Bob})$ and transmits it back 
  to Alice. Alice can then compute $B^i(z) + A^i(u)$ for a random
  $u \sim \mathcal{N}(0, \frac{1}{n^c} \cdot I_{n \times n})$. 
  We can assume all
  coordinates of the output vector $w$ are in the real interval $[0,B]$, since rounding the
  coordinates to this interval can only decrease the error.

  To continue the analysis, we use a proof technique of \cite{PW11} (see the proof
of Lemma 6.8 of \cite{PW11} for a comparison). 
For each $i$ we say that $S^i$ is {\it bad} if either
\begin{itemize}
\item there is no coordinate $j$ in $S^i$ for which $|w_j| \geq \frac{B}{2}$
yet $(x^i, y^i)$ is a YES instance of $\gaplinf^B$, or
\item there is a coordinate $j$ in $S^i$ for which $|w_j| \geq \frac{B}{2}$
yet either $(x^i, y^i)$ is a NO instance of $\gaplinf^B$ or $j$ is
not the unique $j^*$ for which $y_{j^*}^i - x_{j^*}^i = B$. 
\end{itemize}
  The proof of Lemma 6.8 of \cite{PW11} shows that the fraction $C > 0$ of bad $S^i$ can be made
  an arbitrarily small constant by appropriately choosing an appropriate $B = \Theta(1/\eps^{1/2})$.
  Here we choose $C = \delta_1$. We also condition on $\|u\|_2\leq n^{-c}$ for a sufficiently
  large constant $c > 0$, which occurs
  with probability at least $1-1/n$. Hence, with probability at least $1-\delta_1/2-1/n > 1-\delta_1$,
  we have a $1-\delta_1$ fraction of indices $i$ for which the following algorithm
  correctly outputs $\gaplinf(x^i, y^i)$: if there is a $j \in S^i$
  for which $|w_j| \geq B/2$, output YES, otherwise output NO. 
  It follows by Theorem \ref{thm:theMain} that $Alg$
  requires $ b = \Omega(k/\eps^{1/2})$, independent of the number
  of rounds.
\end{proof}

The next lemma is the same as Lemma 6.9 of \cite{PW11}, replacing
$\delta$ in the lemma statement there with the constant $\delta_1$ and
observing that the lemma holds for compressed sensing schemes with an
arbitrary number of rounds. 
\begin{lemma}\label{lem:noise}
  Suppose $\Pr_{z \sim \rho}[\|(w-z)_{[n] \setminus S}\|_1 \leq
  (1-8\eps) \cdot \|z_{[n] \setminus S}\|_1] \geq \delta_1.$  Then
  $Alg$ requires $b = \Omega(k \log(1/\eps) / \eps^{1/2})$.
\end{lemma}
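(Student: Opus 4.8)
The plan is to follow the proof of Lemma~6.9 of \cite{PW11} essentially verbatim: the statement here differs only in replacing the failure parameter $\delta$ by the fixed constant $\delta_1$ (which changes nothing, since the original argument already tolerates an arbitrary constant failure probability) and in allowing the scheme to be adaptive with an arbitrary number of rounds (which also changes nothing, since the argument never looks inside the scheme --- it uses only that there is a single $b$-bit string, the concatenation $B^1z,\dots,B^rz$, from which the output $w$ is a deterministic function, together with a counting lower bound on $b$; how those $b$ bits are split across rounds is irrelevant). That the adaptive compressed sensing scheme reduces to exactly such a bit scheme, with Gaussian rather than uniform noise, is precisely what Lemmas~\ref{lem:roundingFix} and~\ref{lem:bits} have already established, so the object $Alg$ and its bit count $b$ are exactly what the combinatorial argument below consumes.

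Here is the argument I would carry out. Recall $z\sim\rho$ is supported on a uniformly random set $T\subset[n]$ of $rm = k/\eps^{3/2}$ coordinates, where $n = k/\eps^3$; on $T$ about $k/2$ coordinates carry value $B = \Theta(1/\sqrt{\eps})$ (the special coordinates of the YES instances) and the remaining $\Theta(km)$ are independent uniform $\{0,1\}$. The set $S$ of top-$k$ coordinates captures the large coordinates plus filler, so $\norm{1}{z_{[n]\setminus S}} = \Theta(km)$ equals, up to lower-order terms, the number of value-$1$ coordinates in the tail, and the \emph{location} of these coordinates is close to a uniformly random $\Theta(km)$-subset of $[n]$, hence carries $H := \Theta\!\big(km\log(n/(km))\big) = \Theta\!\big(km\log(1/\eps)\big)$ bits of entropy. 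The hypothesis says that with probability $\ge \delta_1$ the output beats the trivial ``output $0$'' solution (whose tail error is exactly $\norm{1}{z_{[n]\setminus S}}$) by an additive $\Theta(\eps km)$ in $\ell_1$. Clipping $w$ into $[0,1]$ on the tail only helps, and then by a layer-cake argument, to save $\Theta(\eps km)$ there must be a level $\theta\in(0,1)$ whose super-level set $\{j : w_j > \theta\}$ is $\Omega(\eps km)$-enriched for the true value-$1$ support. Consequently a fixed $b$-bit message gives a fixed $w$, and a fixed $w$ --- together with its $O(n)$ super-level sets --- can be a $(1-8\eps)$-good tail approximation for only a $2^{-\Omega(\eps H)}$ fraction of inputs $z$. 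Since the success probability is the constant $\delta_1$ and we may additionally condition on the benign event $\norm{2}{u}\le n^{-c}$ at a cost of $1/n$ (as in Lemma~\ref{lem:oneSide}), a union bound over the $2^b$ messages forces $2^b = 2^{\Omega(\eps H)}$, i.e.\ $b = \Omega(\eps\, km\log(1/\eps)) = \Omega\!\big(k\log(1/\eps)/\sqrt{\eps}\big)$, which --- fed back via Theorem~\ref{thm:theMain} as for Lemma~\ref{lem:oneSide} --- is the claim.

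The step I expect to be the actual work, and the one carried out in detail in \cite{PW11}, is the quantitative ``approximate Fano'' estimate: that a $(1-8\eps)$-factor improvement on the tail $\ell_1$ error extracts \emph{exactly} a $\Theta(\eps)$ fraction of the $\Theta(km\log(1/\eps))$ location bits. This needs the super-level-set reduction above together with a careful count of how many $\Theta(km)$-subsets of $[n]$ can be simultaneously $\Omega(\eps km)$-enriched inside a fixed family of $O(n)$ threshold sets; the $\log(1/\eps)$ in the final bound is exactly the $\log(n/(km)) = \Theta(\log(1/\eps))$ coming out of that count. Everything else --- Gaussian in place of uniform noise, tracking the $1/n$ loss from the noise event, and permitting an arbitrary number of rounds --- is orthogonal to this combinatorial core, since the core is stated purely in terms of $z$, $S$, $w$ and the bit count $b$; so once the \cite{PW11} estimate is in hand, no genuinely new ingredient is required.
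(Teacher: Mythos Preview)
Your approach is the paper's: invoke Lemma~6.9 of \cite{PW11} for the bound $I(w;z)=\Omega(\eps\, mr\log(n/(mr)))$, and observe that adaptivity is irrelevant because the only channel from $z$ to $w$ is the $b$-bit concatenation $B^1z,\dots,B^rz$ together with the independent noise $u$. Two small slips to fix: (i) in the adaptive setting ``a fixed $b$-bit message gives a fixed $w$'' is false as stated, since the matrices $B^i$ for $i\ge 2$ also depend on $u$ --- either fix $u$ first by an averaging argument before your union bound, or (cleaner, and what the paper does) phrase the step as $I(w;z)\le b$ using $u\perp z$; (ii) drop the closing reference to Theorem~\ref{thm:theMain} --- this lemma concludes directly with the bit bound $b=\Omega(k\log(1/\eps)/\sqrt{\eps})$ and invokes no communication-complexity reduction (that is Lemma~\ref{lem:oneSide}'s mechanism, not this one's).
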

\begin{proof}
  As argued in Lemma 6.9 of \cite{PW11}, we have $I(w ; z)
  = \Omega(\eps mr \log(n/(mr)))$, which implies that $b = \Omega(\eps
  mr \log(n/(mr)))$, independent of the number $r$ of rounds used by
  $Alg$, since the only information about the signal is in the concatenation
  of $B^1z, \ldots, B^rz$. 
\end{proof}
Finally, we combine our Lemma \ref{lem:oneSide} and Lemma
\ref{lem:noise} to prove the analogue of Theorem 6.10 of \cite{PW11},
which completes this section. 
\begin{theorem}\label{thm:conclusion}
  Any $(1+\eps)$-approximate $\ell_1/\ell_1$ recovery scheme with
  success probability at least $1-\delta_1/2-1/n$ must make
  $\Omega(k/(\eps^{1/2} \cdot \log (k/\eps)))$ measurements.
\end{theorem}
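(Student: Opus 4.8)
The plan is to combine Lemma~\ref{lem:oneSide} and Lemma~\ref{lem:noise} via a case analysis on the behavior of the sparse recovery output $w$ relative to the head $S$ and tail $[n]\setminus S$ of $z$, exactly in the spirit of Theorem~6.10 of~\cite{PW11}. Start from the basic guarantee~\eqref{eqn:main}, which says that for $Alg$ succeeding with probability $1-\delta_1/2$, we have
\[
\|(w-z)_S\|_1 + \|(w-z)_{[n]\setminus S}\|_1 \leq (1+2\eps)\|z_{[n]\setminus S}\|_1
\]
with probability $1-\delta_1/2$ over $z\sim\rho$. The idea is that this single inequality must split into two regimes: either the head error $\|(w-z)_S\|_1$ is small (at most $10\eps\|z_{[n]\setminus S}\|_1$), in which case Lemma~\ref{lem:oneSide} forces $b = \Omega(k/\eps^{1/2})$; or the head error is large, which (subtracting from~\eqref{eqn:main}) forces the tail error $\|(w-z)_{[n]\setminus S}\|_1 \leq (1+2\eps-10\eps)\|z_{[n]\setminus S}\|_1 = (1-8\eps)\|z_{[n]\setminus S}\|_1$, in which case Lemma~\ref{lem:noise} forces $b = \Omega(k\log(1/\eps)/\eps^{1/2})$.

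First I would make the case split quantitative: define the event $\mathcal{E}_{\text{head}} = \{\|(w-z)_S\|_1 \leq 10\eps\|z_{[n]\setminus S}\|_1\}$. On the success event (probability $\geq 1-\delta_1/2$), either $\mathcal{E}_{\text{head}}$ holds or its complement holds; hence at least one of $\Pr[\mathcal{E}_{\text{head}}] \geq 1-\delta_1/2 - \Pr[\overline{\mathcal{E}_{\text{head}}}\cap\text{success}]$ — more cleanly, on the success event at least one of the two sub-events occurs, so one of them has probability $\geq (1-\delta_1/2)/2 \geq 1-\delta_1$ after adjusting constants appropriately, or one can simply note that whichever of $\Pr_{z\sim\rho}[\mathcal{E}_{\text{head}}]$ or $\Pr_{z\sim\rho}[\overline{\mathcal{E}_{\text{head}}}]$ restricted to the success event is larger is at least half of $1-\delta_1/2$. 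Matching this against the hypotheses of Lemma~\ref{lem:oneSide} (which wants probability $\geq 1-\delta_1/2$) and Lemma~\ref{lem:noise} (which wants probability $\geq \delta_1$) is the bookkeeping step: the tail-error event needs only probability $\delta_1$ to trigger Lemma~\ref{lem:noise}, so the constant slack is comfortable there; the head case needs care, and one should set the $10\eps$ threshold and the constants in Lemma~\ref{lem:oneSide} so that ``success and head-error large'' having probability $<\delta_1/2$ forces ``head-error small'' to have probability $\geq 1-\delta_1/2$ as required. Then conclude $b = \Omega(k/\eps^{1/2})$ in \emph{both} cases (the second bound is even stronger), and translate from bits to measurements using Lemma~\ref{lem:bits}, which costs a factor of $(1+c+d)\log n = \Theta(\log n) = \Theta(\log(k/\eps^3)) = \Theta(\log(k/\eps))$, giving the final $\Omega(k/(\eps^{1/2}\log(k/\eps)))$ measurement lower bound. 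The success probability degrades from $1-\delta$ to $1-\delta_1/2-1/n$ through the chain $\delta \to \delta+1/n$ (Lemma~\ref{lem:bits}) and the $\delta_1$-fraction arguments in the reduction.

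The main obstacle I expect is not any single hard estimate but rather getting the constants in the case split to line up so that each lemma's probabilistic hypothesis is genuinely met: Lemma~\ref{lem:oneSide} and Lemma~\ref{lem:noise} are stated with specific success thresholds ($1-\delta_1/2$ and $\delta_1$ respectively), and one must verify that~\eqref{eqn:main} together with the $10\eps$ vs.\ $8\eps$ arithmetic (which leaves a $2\eps$ buffer that is exactly the rounding slack already absorbed into~\eqref{eqn:main}) actually produces one of those two events with the requisite probability. A secondary point worth stating carefully is that the $\log n = \Theta(\log(k/\eps))$ conversion in Lemma~\ref{lem:bits} is what turns the clean $\Omega(k/\eps^{1/2})$ bit bound into the $\log$-lossy measurement bound — and that this is where our bound improves on the earlier non-adaptive bound of~\cite{PW11}, which lost an extra $\log k$ factor, because here the direct-sum theorem (Theorem~\ref{thm:theMain}) already delivers the full $\Omega(mr/B^2) = \Omega(k/\eps^{1/2})$ without a per-instance log loss.
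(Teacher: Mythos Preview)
Your proposal is correct and follows essentially the same approach as the paper's own proof: use the guarantee~\eqref{eqn:main} to force a dichotomy between small head error (triggering Lemma~\ref{lem:oneSide}) and small tail error (triggering Lemma~\ref{lem:noise}), conclude $b=\Omega(k/\eps^{1/2})$ in either case, and divide by $\Theta(\log n)=\Theta(\log(k/\eps))$ via Lemma~\ref{lem:bits}. The paper's proof is in fact terser than yours and simply asserts that one of the two lemma hypotheses must hold; your more explicit discussion of how the thresholds $1-\delta_1/2$ and $\delta_1$ are met is a reasonable elaboration of what the paper leaves implicit (and, as you note, relies on $\delta_1$ being an absolute constant that can be adjusted in the lemmas without affecting their conclusions).
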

\begin{proof}
  We will lower bound the number of bits used by any $\ell_1/\ell_1$ multiround bit scheme
  $Alg$. If $Alg$ succeeds with probability at least $1-\delta_1/2$,
  then in order to satisfy (\ref{eqn:main}), we must either have
  $\|(w-z)_S\|_1 \leq 10\eps \cdot \|z_{[n] \setminus S}\|_1$ or
  $\|(w-z)_{[n] \setminus S}\|_1 \leq (1-8\eps)\|z_{[n] \setminus
    S}\|_1$. Since $Alg$ succeeds with probability at least
  $1-\delta_1/2$, it must either satisfy the hypothesis of Lemma
  \ref{lem:oneSide} or Lemma \ref{lem:noise}. But by these two lemmas,
  it follows that $b = \Omega(k/\eps^{1/2})$. Therefore by Lemma
  \ref{lem:bits}, any $(1+\eps)$-approximate $\ell_1/\ell_1$ sparse
  recovery algorithm succeeding with probability at least $1-\delta_1/2
  - 1/n$ requires $\Omega(k/(\eps^{1/2} \cdot \log(k/\eps)))$
  measurements.
\end{proof}

\section{Acknowledgements}

Some of this work was performed while E. Price was an intern at IBM
research, and the rest was performed while he was supported by an NSF
Graduate Research Fellowship.

\newpage
\bibliographystyle{alpha}
\bibliography{eps-sparse}

\appendix

\section{Relationship between Post-Measurement and Pre-Measurement
  noise}\label{app:noisekind}

In the setting of~\cite{ACD11}, the goal is to recover a $k$-sparse
$x$ from observations of the form $Ax+w$, where $A$ has unit norm rows
and $w$ is i.i.d. Gaussian with variance $\norm{2}{x}^2/\eps^2$.  By
ignoring the (irrelevant) component of $w$ orthogonal to $A$, this is
equivalent to recovering $x$ from observations of the form $A(x+w)$.
By contrast, our goal is to recover $x+w$ from observations of the
form $A(x+w)$, and for general $w$ rather than only for Gaussian $w$.

By arguments in~\cite{PW11,HIKP12}, for Gaussian $w$ the difference
between recovering $x$ and recovering $x+w$ is minor, so any lower
bound of $m$ in the~\cite{ACD11} setting implies a lower bound of
$\min(m, \eps n)$ in our setting.  The converse is only true for
proofs that use Gaussian $w$, but our proof fits this category.

\section{Information Chain Rule with Linear Observations}

\begin{lemma}\label{lemma:splitentropy}
  Suppose $a_i = b_i + w_i$ for $i \in [s]$ and the $w_i$ are
  independent of each other and the $b_i$.  Then
  \[
  I(a; b) \leq \sum_i I(a_i; b_i)
  \]
\end{lemma}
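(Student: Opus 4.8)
The plan is to expand the mutual information in terms of (differential) entropy and exploit the additive-noise structure together with subadditivity. Write $a = (a_1,\dots,a_s)$, $b = (b_1,\dots,b_s)$, $w = (w_1,\dots,w_s)$, and start from
\[
I(a;b) = h(a) - h(a \mid b).
\]
I interpret the hypothesis as: $w_1,\dots,w_s$ are jointly independent and the vector $w$ is independent of the vector $b$.

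First I would compute $h(a\mid b)$. Conditioned on $b = \beta$, we have $a = \beta + w$, a deterministic translate of $w$; since $w$ is independent of $b$, the conditional law of $w$ given $b=\beta$ is just its unconditional law, and differential entropy is translation invariant, so $h(a \mid b = \beta) = h(w)$ for every $\beta$, hence $h(a\mid b) = h(w)$. Mutual independence of the $w_i$ gives $h(w) = \sum_i h(w_i)$. The identical argument applied in a single coordinate gives $h(a_i \mid b_i) = h(w_i)$, using that $w_i$ is independent of $b_i$ (which follows from $w$ being independent of the whole vector $b$). Thus $h(a\mid b) = \sum_i h(w_i) = \sum_i h(a_i \mid b_i)$. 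Next I would bound $h(a)$ by the chain rule and the fact that conditioning does not increase entropy: $h(a) = \sum_i h(a_i \mid a_1,\dots,a_{i-1}) \le \sum_i h(a_i)$. Combining the two facts,
\[
I(a;b) = h(a) - h(a\mid b) \le \sum_i h(a_i) - \sum_i h(a_i\mid b_i) = \sum_i I(a_i;b_i).
\]

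The step I expect to be the only delicate point is not any inequality but the bookkeeping around differential entropy: the identity $I(a;b) = h(a) - h(a\mid b)$ and the chain rule for $h$ require the relevant entropies to be well defined and finite. This is harmless in the intended application, where each $w_i$ is a nondegenerate Gaussian, so $a$ has a conditional density given $b$ and all entropies in sight are finite; I would remark on this. I would also note that the $b_i$ need not be continuous (in the application $b_i = Z^r_j$ is discrete) — the proof only uses $h(a\mid b) = h(w)$ and subadditivity of $h(a)$, neither of which refers to the marginal law of $b$.
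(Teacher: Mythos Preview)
Your proof is correct and follows essentially the same approach as the paper: compute $h(a\mid b)=h(w)=\sum_i h(w_i)=\sum_i h(a_i\mid b_i)$ via translation invariance and independence, then use subadditivity $h(a)\le\sum_i h(a_i)$ to conclude. Your added remarks on finiteness of the differential entropies and the fact that $b$ may be discrete are sound and simply make explicit what the paper leaves implicit.
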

\begin{proof}
  Note that $h(a\mid b) = h(a-b\mid b) = h(w \mid b) = h(w)$.  Thus
  \begin{align*}
    I(a; b) &= h(a) - h(a \mid b) = h(a) - h(w)\\
    &\leq \sum_i h(a_i) - h(w_i) \\
    &= \sum_i h(a_i) - h(a_i \mid b_i) = \sum_i I(a_i; b_i)
  \end{align*}
\end{proof}

\section{Switching Distributions from Jayram's Distributional Bound}\label{app:dist}
We first sketch a proof of Jayram's lower bound on the distributional
complexity of $\gaplinf^B$ \cite{J02}, then change it to a different distribution that
we need for our sparse recovery lower bounds in Subsection \ref{sec:change}. Let 
$X, Y \in \{0, 1, \ldots, B\}^m$. Define distribution
$\mu^{m,B}$ as follows: for each $j \in
[m]$, choose a random pair $(Z_j, P_j) \in \{0, 1, 2, \ldots, B\} \times \{0,1\}
\setminus \{(0,1), (B, 0)\}$. If $(Z_j, P_j) = (z, 0)$, then $X_j = z$ and 
$Y_j$ is uniformly distributed in $\{z, z+1\}$; if $(Z_j, P_j) = (z, 1)$, then $Y_j = z$
and $X_j$ is uniformly distributed on $\{z-1, z\}$. 
Let $X = (X_1, \ldots, X_m)$, $Y = (Y_1, \ldots, Y_m)$, $Z = (Z_1, \ldots, Z_m)$ and $P = (P_1, \ldots, P_m)$. 

The other distribution we define is $\sigma^{m,B}$, 
which is the same as distribution $\sigma$ 
in Section \ref{sec:eps} (we include $m$ and $B$ in the
notation here for clarity). This is defined by
first drawing $X$ and $Y$ according to distribution $\mu^{m,B}$. Then, 
we pick a random coordinate $S \in [m]$ and replace
$(X_S, Y_S)$ with a uniformly random element in the set $\{(0,0), (0,B)\}$. 

Let $\Pi$ be a deterministic protocol that errs with probability at most $\delta$
on input distribution $\sigma^{m,B}$. 

By the chain rule for mutual information, when $X$ and $Y$ are
distributed according to $\mu^{m,B}$, 
\begin{eqnarray*}
I(X, Y ; \Pi | Z, P)
& = & \sum_{j=1}^m I(X_j, Y_j ; \Pi | X^{<j}, Y^{<j}, Z, P),
\end{eqnarray*}
which is equal to 
\begin{eqnarray*}
\sum_{j=1}^m {\bf E}_{x, y, z,p} [I(X_j, Y_j ; \Pi \ |
Z_j, P_j, X^{<j} = x, Y^{<j} = y, Z^{-j} = z, P^{-j} = p)].
\end{eqnarray*}
Say that an index $j \in [m]$ is {\it good} if conditioned on $S=j$,
$\Pi$ succeeds on $\sigma^{m,B}$ with probability at least $1-2\delta$.
By a Markov argument, at least $m/2$ of the indices $j$ are good. 
Fix a good index $j$. 

We say that the tuple $(x, y, z, p)$ is {\it good} if conditioned on $S = j$,
$X^{<j} = x$, $Y^{<j} = y$, $Z^{-j} = z$, and $P^{-j} = p$, $\Pi$ succeeds on $\sigma^{m,B}$
with probability at least $1-4\delta$. By a Markov bound, with probability
at least $1/2$, $(x, y, z, p)$ is good. Fix a good $(x,y,z,p)$.

We can define a single-coordinate protocol $\Pi_{x,y,z,p,j}$ as follows. 
The parties use $x$ and $y$ to fill in their input
vectors $X$ and $Y$ for coordinates $j' < j$. They also use 
$Z^{-j} = z$, $P^{-j} = p$, and private randomness to fill in their inputs without any
communication on the remaining coordinates $j' > j$. They place their
single-coordinate input $(U, V)$ on their $j$-th coordinate. The parties
then output whatever $\Pi$ outputs. 

It follows that $\Pi_{x,y,z,p,j}$ is a 
single-coordinate protocol $\Pi'$ which distinguishes
$(0,0)$ from $(0,B)$ under the uniform distribution 
with probability at least $1-4\delta$. 
For the single-coordinate problem, we need to bound
$I(X_j, Y_j ; \Pi' | Z_j, P_j)$ when $(X_j, Y_j)$ is uniformly random from 
the set $\{(Z_j, Z_j), (Z_j, Z_j+1)\}$ if $P_j = 0$, and $(X_j, Y_j)$ is uniformly
random from the set $\{(Z_j, Z_j), (Z_j-1, Z_j)\}$ if $P_j = 1$. 
By the same argument as in the proof of Lemma 8.2 of 
\cite{BJKS04}, if $\Pi'_{u,v}$ denotes the distribution on transcripts
induced by inputs $u$ and $v$ and private coins, then we have
\begin{eqnarray}\label{eqn:final}
I(X_j, Y_j ; \Pi' | Z_j, P_j)
\geq \Omega(1/B^2) \cdot (h^2(\Pi'_{0,0}, \Pi'_{0, B}) + h^2(\Pi'_{B,0},\Pi'_{B, B})),
\end{eqnarray}
where 
$$h(\alpha, \beta) = \sqrt{\frac{1}{2}\sum_{\omega \in \Omega} (\sqrt{\alpha(\omega)} - \sqrt{\beta(\omega)})^2}$$
is the Hellinger distance between distributions $\alpha$ and $\beta$ on support $\Omega$.
For any two distributions $\alpha$ and $\beta$, if we define
$$D_{TV}(\alpha, \beta) = \frac{1}{2} \sum_{\omega \in \Omega} |\alpha(\omega)-\beta(\omega)|$$
to be the variation distance between them, then $\sqrt{2} \cdot h(\alpha, \beta) \geq
D_{TV}(\alpha, \beta)$ (see Proposition 2.38 of \cite{BarYossefThesis}).
 
Finally, since $\Pi'$ 
succeeds with probability at least $1-4\delta$ on the uniform distribution
on input pair in $\{(0,0), (0,B)\}$, we have
$$\sqrt{2} \cdot h(\Pi'_{0,0},\Pi'_{0, B}) \geq D_{TV}(\Pi'_{0,0},\Pi'_{0, B}) = \Omega(1).$$
Hence,
\begin{eqnarray*}
I(X_j, Y_j ; \Pi | Z_j, P_j, X^{<j} = x, Y^{<j} = y, Z^{-j} = z, P^{-j} = p)\\
 = \Omega(1/B^2)
\end{eqnarray*}
for each of the $\Omega(m)$ good $j$.  Thus $I(X, Y ; \Pi | Z, P) =
\Omega(m/B^2)$ when inputs $X$ and $Y$ are distributed according to
$\mu^{m,B}$, and $\Pi$ succeeds with probability at least $1-\delta$
on $X$ and $Y$ distributed according to $\sigma^{m,B}$.

\subsection{Changing the distribution}\label{sec:change} Consider the
distribution $$\zeta^{m,B} = (\sigma^{m,B} \mid (X_S, Y_S) =
(0,0)).$$ We show $I(X, Y ; \Pi | Z) = \Omega(m/B^2)$ when $X$ and $Y$
are distributed according to $\zeta^{m,B}$ rather than according to
$\mu^{m,B}$.

For $X$ and $Y$ distributed according to $\zeta^{m,B}$, by the chain rule we again have
that $I(X, Y ; \Pi | Z, P)$ is equal to 
\begin{eqnarray*}
\sum_{j=1}^m {\bf E}_{x, y, z, p} [I(X_j, Y_j ; \Pi |
Z_j, P_j, X^{<j} = x, Y^{<j} = y, Z^{-j} = z, P^{-j} = p)].
\end{eqnarray*}
Again, say that an index $j \in [m]$ is good if conditioned on $S=j$,
$\Pi$ succeeds on $\sigma^{m,B}$ with probability at least $1-2\delta$.
By a Markov argument, at least $m/2$ of the indices $j$ are good. 
Fix a good index $j$. 

Again, we say that the tuple $(x, y, z, p)$ is good if conditioned on $S = j$,
$X^{<j} = x$, $Y^{<j} = y$, $Z^{-j} = z$ and $P^{-j} = p$, $\Pi$ succeeds on $\sigma^{m,B}$
with probability at least $1-4\delta$. By a Markov bound, with probability
at least $1/2$, $(x, y, z, p)$ is good. Fix a good $(x,y,z,p)$.

As before, we can define a single-coordinate protocol $\Pi_{x,y,z,p,j}$. 
The parties use $x$ and $y$ to fill in their input
vectors $X$ and $Y$ for coordinates $j' < j$. They can also use 
$Z^{-j} = z$, $P^{-j} =p$, and private randomness to fill in their inputs without any
communication on the remaining coordinates $j' > j$. They place their
single-coordinate input $(U, V)$, uniformly drawn from $\{(0,0), (0,B)\}$,
on their $j$-th coordinate. The parties output whatever $\Pi$ outputs. Let $\Pi'$
denote $\Pi_{x,y,z,p,j}$ for notational convenience.

The first issue is that unlike before $\Pi'$ is not guaranteed to have success probability
at least $1-4\delta$ since $\Pi$ is not being run on input distribution $\sigma^{m,B}$
in this reduction. The second issue is in bounding $I(X_j, Y_j ; \Pi' | Z_j,P_j)$ since 
$(X_j, Y_j)$ is now drawn from the marginal distribution of $\zeta^{m,B}$ on coordinate $j$. 

Notice that $S \neq j$ with probability $1-1/m$, which we condition on. This immediately resolves
the second issue since now the marginal distribution on $(X_j, Y_j)$ is the same under $\zeta^{m,B}$
as it was under $\sigma^{m,B}$; namely
it is the following distribution: $(X_j, Y_j)$ is uniformly random from 
the set $\{(Z_j, Z_j), (Z_j, Z_j+1)\}$ if $P_j = 0$, and $(X_j, Y_j)$ is uniformly
random from the set $\{(Z_j, Z_j), (Z_j-1, Z_j)\}$ if $P_j = 1$. 

We now address the first issue.  After conditioning on $S \neq j$, we
have that $(X^{-j}, Y^{-j})$ is drawn from $\zeta^{m-1,B}$.  If
instead $(X^{-j}, Y^{-j})$ were drawn from $\mu^{m-1,B}$, then after
placing $(U, V)$ the input to $\Pi$ would be drawn from $\sigma^{m,B}$
conditioned on a good tuple.  Hence in that case, $\Pi'$ would succeed
with probability $1-4\delta$.  Thus for our actual distribution on
$(X^{-j}, Y^{-j})$, after conditioning on $S \neq j$, the success
probability of $\Pi'$ is at least
$$1-4\delta - D_{TV}(\mu^{m-1,B}, \zeta^{m-1,B}).$$

Let $C^{\mu, m-1, B}$ be the random variable which counts the number of 
coordinates $i$ for which $(X_i, Y_i) = (0,0)$ when $X$ and $Y$ are 
drawn from $\mu^{m-1,B}$. Let $C^{\zeta, m-1, B}$ be a random
variable which counts the number of coordinates $i$ for which
$(X_i, Y_i) = (0,0)$ when $X$ and $Y$ are drawn from $\zeta^{m-1, B}$. 
Observe that $(X_i, Y_i) = (0,0)$ in $\mu$ only if $P_i = 0$ and $Z_i = 0$, which happens
with probability $1/(2B)$. Hence,
$C^{\mu, m-1, B}$ is distributed as Binomial$(m-1, 1/(2B))$, while
$C^{\zeta, m-1, B}$ is distributed as Binomial$(m-2, 1/(2B)) + 1$. We use
$\mu'$ to denote the distribution of $C^{\mu, m-1, B}$ and $\zeta'$ to denote
the distribution of $C^{\zeta, m-1, B}$. Also, let $\iota$ denote the 
Binomial$(m-2, 1/(2B))$ distribution. 
Conditioned on $C^{\mu, m-1, B} = C^{\zeta, m-1, B}$, we have that $\mu^{m-1,B}$ and $\zeta^{m-1,B}$ are
equal as distributions, and so 
$$D_{TV}(\mu^{m-1,B}, \zeta^{m-1,B}) \leq D_{TV}(\mu', \zeta').$$
We use the following fact:
\begin{fact}\label{fact:vd} (see, e.g., Fact 2.4 of \cite{gmrz11}). 
Any binomial distribution $X$ with variance equal to $\sigma^2$ 
satisfies $D_{TV}(X, X+1) \leq 2/\sigma$. 
\end{fact}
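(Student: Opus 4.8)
The plan is to reduce the total variation distance $D_{TV}(X,X+1)$ to the maximum value of the probability mass function of $X$, and then apply the standard anti-concentration estimate for the binomial, handling the small-variance regime separately by hand. Throughout write $X \sim \mathrm{Bin}(n,p)$, $q = 1-p$, so that $\sigma^2 = npq$.

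First I would set $a_k = \Pr[X=k]$, so that, since $\Pr[X+1=k] = a_{k-1}$,
\[
D_{TV}(X,X+1) = \frac12 \sum_k \lvert a_k - a_{k-1}\rvert .
\]
The binomial pmf is log-concave, hence unimodal: there is a mode $k^*$ with $a_{k-1}\le a_k$ for $k\le k^*$ and $a_{k-1}\ge a_k$ for $k > k^*$. Thus $a_k - a_{k-1}$ is nonnegative up to $k^*$ and nonpositive afterwards, and the two resulting one-sided sums telescope (using $a_k \to 0$ as $k \to \pm\infty$) to $a_{k^*}$ each, so $\sum_k \lvert a_k - a_{k-1}\rvert = 2a_{k^*}$ and
\[
D_{TV}(X,X+1) = \max_k \Pr[X=k].
\]
This identity in fact holds for every unimodal integer-valued random variable.

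It then remains to show $\max_k \Pr[X=k] \le 2/\sigma$. If $\sigma \le 2$ this is immediate, since the left-hand side is at most $1 \le 2/\sigma$. If $\sigma > 2$, i.e.\ $npq > 4$, I would estimate the mode term $\binom{n}{k^*}p^{k^*}q^{n-k^*}$ via Stirling's inequality $\sqrt{2\pi m}\,(m/e)^m \le m! \le \sqrt{2\pi m}\,(m/e)^m e^{1/(12m)}$ together with the elementary bound $\lvert k^* - np\rvert \le 1$ on the mode; this yields $\max_k \Pr[X=k] \le C/\sqrt{2\pi npq}$ for an absolute constant $C$ close to $1$. (Equivalently one may cite the local limit estimate $\max_k \Pr[X=k] = (1+o(1))/\sqrt{2\pi npq}$ as $npq \to \infty$, with the trivial bound covering bounded $npq$.)

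The only slightly delicate point will be extracting the clean constant $2$, rather than an unspecified $O(1)$, from the Stirling step; but since the true leading constant is $1/\sqrt{2\pi}\approx 0.4$, even crude bounds on the correction factor $e^{1/(12m)}$ and on $\lvert k^*-np\rvert$ leave roughly a factor of $5$ of slack, so this presents no real obstacle.
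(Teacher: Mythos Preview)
The paper does not prove this fact; it is stated as Fact~\ref{fact:vd} with a citation to \cite{gmrz11} and used as a black box in Appendix~\ref{app:dist}. There is therefore no in-paper argument to compare against.

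Your argument is correct and is the standard one. The identity $D_{TV}(X,X+1)=\max_k \Pr[X=k]$ via unimodality and telescoping is exactly right, and the reduction to bounding the mode probability by $O(1/\sigma)$ via Stirling (with the trivial bound $1\le 2/\sigma$ handling $\sigma\le 2$) is the usual way to finish. The only thing left implicit is the explicit verification that the Stirling correction factors keep the constant below $2$ uniformly over all $n,p$ with $npq>4$; as you note, the asymptotic constant is $1/\sqrt{2\pi}\approx 0.4$, so the slack is ample and a few lines of calculation would close this. Since the paper treats the fact as a citation, your proposal in fact supplies strictly more than the paper does.
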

By definition, $$\mu' = (1-1/(2B)) \cdot \iota + 1/(2B) \cdot \zeta'.$$
Since the variance of the Binomial$(m-2, 1/(2B))$ distribution is 
$$(m-2)/(2B) \cdot (1-1/(2B)) = m/(2B) (1-o(1)),$$ applying
Fact \ref{fact:vd} we have
\begin{eqnarray*}
D_{TV}(\mu', \zeta')
& = & D_{TV}((1-1/(2B)) \cdot \iota + (1/(2B)) \cdot \zeta', \zeta')\\
& = & \frac{1}{2} \cdot \|(1-1/(2B)) \cdot \iota + (1/(2B)) \cdot \zeta' - \zeta'\|_1\\
& = & (1-1/(2B)) \cdot D_{TV}(\iota, \zeta')\\
& \leq & \frac{2\sqrt{2B}}{\sqrt{m}} \cdot (1+o(1))\\
& = & O \left (\sqrt{\frac{B}{m}} \right ). 
\end{eqnarray*}
It follows that the success probability of $\Pi'$ is at least 
$$1-4\delta-O \left (\sqrt{\frac{B}{m}} \right ) \geq 1-5\delta.$$
Let $E$ be an indicator random variable for the event that $S \neq j$. Then
$H(E) = O((\log m) / m)$. Hence,
\begin{eqnarray*}
I(X_j, Y_j ; \Pi' | Z_j, P_j) & \geq & I(X_j, Y_j ; \Pi' | Z_j, P_j, E) - O((\log m) / m)\\
& \geq & (1-1/m) \cdot I(X_j, Y_j ; \Pi' | Z_j, P_j, S \neq j)
- O((\log m) / m)\\
& = & \Omega(1/B^2), 
\end{eqnarray*}
where we assume that $\Omega(1/B^2) - O((\log m) / m) = \Omega(1/B^2).$

Hence, $I(X, Y ; \Pi | Z, P) = \Omega(m/B^2)$ when inputs
$X$ and $Y$ are distributed according to $\zeta^{m,B}$, and $\Pi$ succeeds
with probability at least $1-\delta$ on $X$ and $Y$ distributed
according to $\sigma^{m,B}$. 


\end{document}